\newtheorem{theorem}{Theorem}[section]
\theoremstyle{definition}
\theoremstyle{remark}
\newtheorem{remark}[theorem]{Remark}
\numberwithin{equation}{section}
\theoremstyle{plain}
\begin{document}
\title[Quotient of Normal Random Variables]{The Quotient of Normal Random Variables And Application to Asset Price Fat Tails}
\author{Carey Caginalp}
\address{Department of Mathematical Sciences, Carnegie Mellon University, Pittsburgh
PA\ 15213 and Mathematics Department, University of Pittsburgh, Pittsburgh, PA\ 15260}
\email{carey\_caginalp@alumni.brown.edu}
\urladdr{http://www.pitt.edu/\symbol{126}careycag/}
\author{Gunduz Caginalp}
\address{Mathematics Department, University of Pittsburgh, Pittsburgh, PA\ 15260}
\email{caginalp@pitt.edu}
\urladdr{http://www.pitt.edu/\symbol{126}caginalp/}
\date{September 28, 2017}
\keywords{Quotient of Normals, Heavy Tails, Fat Tails, Supply/Demand, Tail of
Distribution, Leptokurtosis, Asset Price Change, Returns on Stocks}

\begin{abstract}
The quotient of random variables with normal distributions is examined and
proven to have have power law decay, with density $f\left(  x\right)  \simeq
f_{0}x^{-2}$, with the coefficient depending on the means and variances of the
numerator and denominator and their correlation. We also obtain the
conditional probability densities for each of the four quadrants given by the
signs of the numerator and denominator for arbitrary correlation $\rho
\in\lbrack-1,1).$ For $\rho=-1$ we obtain a particularly simple closed form
solution for all $x\in$ $\mathbb{R}$. The results are applied to a basic issue
in economics and finance, namely the density of relative price changes.
Classical finance stipulates a normal distribution of relative price changes,
though empirical studies suggest a power law at the tail end. By considering
the supply and demand in a basic price change model, we prove that the
relative price change has density that decays with an $x^{-2}$ power law.
Various parameter limits are established.

\end{abstract}
\maketitle

\section{Introduction}

A long-standing puzzle in economics and finance has been the "fat tails"
phenomenon in relative asset price change and other observations that refer to
the empirically observed power-law decay rather than the expected classical
exponential. In practical terms, this means unusual events are less rare than
expected, resulting in broad implications as discussed below. Given the quite
general application of the Central Limit Theorem, it is natural to expect that
the frequency of the relative price change as a function of the relative price
change would result in a normal, or Gaussian, distribution with exponential
decay, i.e., the density has the same tail as $f\left(  x\right)  =\left(
2\pi\sigma^{2}\right)  ^{-1/2}e^{-\frac{1}{2}\left(  x-\mu\right)  ^{2}%
/\sigma^{2}}$ where $\mu$ is the mean and $\sigma^{2}$ is the variance.

The assumption of a normal distribution for relative price changes dates back
to Bachelier's thesis \cite{BA} in 1900, and was further popularized in the
Black-Scholes work on options pricing \cite{BS}. Champagnat et. al. \cite{CH}
note several reasons for the saliency of utilizing log-normal price changes.
First, they can be "simply interpreted and estimated. Second, closed-form
expression exists for several options. Third, they could be embedded in a
continuous time process, as the geometric Brownian motion, which models the
evolution of the stock over the time. Indeed, many theories, for example the
Capital Asset Pricing Model (CAPM) for portfolio management, take their roots
in the Gaussian world."

One particular application involves "Value-at-Risk" which addresses questions
such as: Can we be sure that the investment will retain at least, say, $75\%$
of its current value within a five year time period with a $95\%$ confidence?
This methodology is often at the heart of risk analysis of an investment
portfolio. The Gaussian assumption facilitates calculations; however, there
have been numerous studies that indicate the risk is understated as a result
\cite{CH}, \cite{XA}, \cite{XI}. An early study by Fama \cite{F} provided
empirical evidence that there were ten times as many observations of relative
price changes than would be expected at four standard deviations from the
classical theories stipulating normal distributions. Mandelbrot and Hudson
\cite{MH} express their perspective in a section heading: "Markets are very,
very risky -- more risky than the standard theories imagine." In a more
generalized context, Taleb \cite{NG} has long asserted that unusual events
occur far more often than one would expect from the normal distribution.
Another aspect of work in this area has involved modeling \cite{LA}, \cite{M}.
The empirical observations of fat tails have also been noted in high frequency
trading \cite{DA}.

Thus, the question of the tail of the distribution of relative price changes
is important in several key areas of finance and economics. As a practical
matter it would appear that one can investigate empirically, and implement the
conclusions without reference to a particular model. While data seems to be
abundant, the problem is that obtaining a large amount of it often entails
using data from older time periods that may well be irrelevant. Hence, the
theoretical examination of the origins of the tail of the distribution becomes
crucial. Various explanations have been offered for the observation of fat
tails. These differ from our approach in that they stipulate an exogenous
influence that alters the natural normal decay. For example, it has been
argued that large institutional investors placing trades in less \ liquid
markets tend to cause large spikes \cite{GGP}. Theoretical models, e.g.,
\cite{M}, using random walk have been used to explain fat tails. We refer to
the book by Kemp \cite{KE} for a review of the literature. While these may be
important factors that lead to fat tails, we demonstrate below that fat tails
also arise from the endogenous price formation process.

In the classical approach to finance the basic starting point is the
stochastic differential equation for the relative price change, $P^{-1}%
dP=d\log P$ as a function of time, $t,$ and $\omega\in\Omega$ (with $\Omega$
as the sample space):
\begin{equation}
d\log P=\mu dt+\sigma dW. \label{dpdt}%
\end{equation}
Here $W$ is the standard Brownian motion, so $\Delta W:=W\left(  t\right)
-W\left(  t-\Delta t\right)  \sim\mathcal{N}\left(  0,\Delta t\right)  ,$
i.e., $W$ is normal with the variance as $\Delta t$ and mean $0$, and has
independent increments. The stochastic differential equation above is
shorthand for the integral form (suppressing $\omega$ in notation)
\begin{equation}
\log P\left(  t_{2}\right)  -\log P\left(  t_{1}\right)  =\int_{t_{1}}^{t_{2}%
}\mu dt+\int_{t_{1}}^{t_{2}}\sigma dW
\end{equation}
where $\mu$ and $\sigma$ can be constant, functions of $t$, or random
variables. For $\mu,\sigma$ constant, and $\Delta t:=t_{2}-t_{1},$ one can
write%
\begin{equation}
\Delta\log P:=\log P\left(  t_{2}\right)  -\log P\left(  t_{1}\right)
=\mu\Delta t+\sigma\Delta W.
\end{equation}
With the assumption that $\sigma$ is nearly constant over time, classical
finance clearly stipulates that the relative temporal changes in asset prices
should be normal. The basic equation (\ref{dpdt}) is obtained from the idea
that all information about the asset is incorporated into the price, and that
random world events alter the value on each time interval. Further, the
assumption of the existence of a vast arbitrage capital means that the changes
in the valuation are immediately reflected in the price, notwithstanding any
bias or mistake on the part of the less knowledgeable investors. Consistent
with the Central Limit Theorem, it is assumed that the events that alter the
asset valuation are normally distributed. But the important and tacit second
assumption is that relative price changes inherit this property.

While (\ref{dpdt}) is the basis for a large majority of papers on asset prices
and related issues in finance, it is difficult to generalize in some
directions. Indeed, with the assumption of infinite arbitrage already built
into the model, what can one subtract from infinity in order to obtain the
randomness that arises from the finiteness of trader assets and order flow?
What is needed then is an approach that takes into account more of the
microstructure of trading, i.e., the supply and demand of the asset submitted
to the market clearinghouse (see e.g., \cite{CB}, \cite{MA1},\cite{PP} and
references therein).

In this paper we examine the temporal evolution in percentage price changes by
modeling price change that utilizes a fundamental approach of supply/demand
economics analysis. We show that if one assumes supply and demand are normally
distributed, the mathematics of the quotient of normals suffices to yield fat
tails. In particular, in the limit of large deviations from the mean, one
obtains the result $f\left(  x\right)  \simeq f_{0}x^{-2}$ for the density for
large $x,$ where the constant $f_{0}$ depends on the means, variances and
correlations of supply and demand.

Thus modeling of the relative price change in terms of finite supply and
demand lead naturally to the basic statistical problem of the distribution of
the quotient of two normal random variables. While there is a long history of
the problem, surveyed below, most results concern the mid-range of the
distribution, rather than the tail. We obtain a number of exact
representations and rigorous bounds on the density conditioned upon the signs
of the numerator and denominator, as well as the overall density for all
correlations $\rho$ such that $\left\vert \rho\right\vert <1.$ For $\rho=-1$
we obtain a particularly simple exact expression for the density for all
$x\in\mathbb{R}.$

\section{The Model and the Quotient of Normals}

We focus on modeling price change in economics; the issues are similar in
other disciplines in which quotients arise. Classical economics stipulates
that prices remain constant in time when supply and demand are equal
\cite{WG}, \cite{HGH}, thereby defining equilibrium. When demand exceeds
supply, prices rise to restore equilibrium, and analogously in the other
direction. There is theoretical \cite{CB}, experimental \cite{CP} and
empirical \cite{CD} evidence that asset price change can be modeled as basic
goods, with the supply and demand depending on a number of factors such as the
cash/asset ratios.

Let $S\left(  t;\omega\right)  $ and $D\left(  t;\omega\right)  $ be the
supply and demand, respectively. The most general expression for relative
price change (see \cite{CB} and references, and, \cite{HQ} p. 165 for
motivation for the linear equation below) can be written as
\begin{equation}
\tau\frac{1}{P}\frac{dP}{dt}=g\left(  \frac{D-S}{S}\right)  ,
\end{equation}
where $\tau$ is the time scale, $P\left(  t;\omega\right)  $ is the unit price
of the asset, and $g$ is a differentiable function such that $g\left(
0\right)  =0,$ reflecting the assumption that prices do not change when demand
equals supply, and $g^{\prime}\left(  0\right)  =c>0,$ for some positive
constant $c,$ which stipulates that prices rise when demand exceeds supply. If
trading is very active, prices will react to small changes in supply/demand
imbalances, and the function $g$ can be assumed to be linear.\ Also, the
constants $\tau$ and $c$ can be incorporated into a dimensionless time,
yielding our basic starting point:%

\begin{equation}
\frac{1}{P\left(  t;\omega\right)  }\frac{d}{dt}P\left(  t;\omega\right)
=\frac{D\left(  t;\omega\right)  }{S\left(  t;\omega\right)  }-1.
\label{supplyDemandEqn}%
\end{equation}

Rather than considering the randomness directly in terms of prices, as in the
classical approach, we assume that supply and demand are random variables. In
a liquid market (i.e., frequently traded), the randomly flowing orders for
supply and demand can be approximated by normal distributions (as discussed
below). However, supply and demand are not likely to be independent, as a
random event that increases buying is likely to diminish selling. In general,
we can assume that $D$ and $S$ are bivariate normal random variables. This
leads to the question of estimating the tail of a distribution of
(\ref{supplyDemandEqn}) above, i.e., the quotient of bivariate normal random
variables. Our particular interest is for negatively correlated $D$ and $S$
but the analysis below will be for the full range of correlations, as similar
issues arise in other problems, e.g., physical, biological, in which there is
a quotient or random variables.

The supply and demand on a given interval consists of buy and sell orders
submitted to the market with some random distribution. The orders will be
influenced by news which will arrive from many independent sources, so that
the Central Limit Theorem will apply. That is, under a broad set of
conditions, the arrival of random orders, and thus, supply and demand, will be
approximated well by the normal distribution. However, price formation evolves
through a process that is almost deterministic. In other words, if one had a
large sample of the same supply/demand graphs, the resulting relative price
change would exhibit only a small variance as market makers and short term
traders can readily see the very short term market direction. This is a basic
consequence of economic game theory (see \cite{WG}, \cite{PP} and references
therein). In more practical terms, if we consider the stock of a major company
that trades with high volume, there will be a large number of market makers
whose only business is to profit from any deviations from the "correct" price,
given the order flow. Given a particular supply/demand graph, a significant
deviation can only arise from, not one, but many of these market makers erring
in the same direction. In summary, the randomness is inherently in the supply
and demand curves. For fixed supply/demand functions, the price evolution is
nearly deterministic.

The empirical assumption that relative price changes are normal (often stated
as price change is log-normal) has been tested, with results that indicate
significant deviations from normality (see e.g., \cite{F}, \cite{CH},
\cite{XA}). While there are numerous studies on the distribution of trading
prices, empirical studies testing the normality of buy/sell orders for major
stocks and commodities would be instrumental in understanding the problem of
fat-tails in relative price change.

We assume conditions on the orders that are compatible with the Central Limit
Theorem, so that, for $n$ large, the supply and demand are governed by a
bivariate normal distribution. Let $X_{1}:=D,\ X_{2}:=S$ and
\begin{equation}
R:=X_{1}/X_{2}-1.
\end{equation}
We assume that $\vec{X}:=\left(  X_{1},X_{2}\right)  ^{T}$ constitutes a
bivariate normal distribution having density, for $\vec{s}\in\mathbb{R}^{2},$
\begin{equation}
f\left(  \vec{s};\vec{\mu},\Sigma\right)  :=\left(  2\pi\right)
^{-1}\left\vert \Sigma\right\vert ^{-1/2}e^{-\frac{1}{2}\left(  \vec{s}%
-\vec{\mu}\right)  ^{T}\Sigma^{-1}\left(  \vec{s}-\vec{\mu}\right)  }\ ,
\end{equation}
where $\vec{\mu}:=\left(  \mu_{1},\mu_{2}\right)  $ and $\Sigma$ is the
covariance matrix,
\begin{equation}
\Sigma:=\left(
\begin{array}
[c]{cc}%
\sigma_{1}^{2} & \sigma_{12}\\
\sigma_{12} & \sigma_{2}^{2}%
\end{array}
\right)  \ \label{covMatrixDef}%
\end{equation}
with $\sigma_{12}:=E\left[  X_{1}X_{2}\right]  $ and $\sigma_{i}^{2}=E\left[
X_{i}^{2}\right]  $ for $i=1,2.$ One has the basic bound $\left\vert
\sigma_{12}\right\vert <\sigma_{1}\sigma_{2}$. Upon defining $\rho
:=\sigma_{12}/\left(  \sigma_{1}\sigma_{2}\right)  $ one has the result that
$\Sigma^{-1}$ exists if and only if $\left\vert \rho\right\vert <1$. Note that
$\vec{X}$ is said to have a singular bivariate normal distribution if there
exists real numbers $\mu_{a},\ \mu_{b},\ \sigma_{a},\ \sigma_{b}$ such that
$\vec{X}$ and $\left(  \sigma_{a}Y+\mu_{a},\sigma_{b}Y+\mu_{b}\right)  ^{T}$
are identically distributed, with $Y\sim\mathcal{N}\left(  0,1\right)  .$ This
is equivalent to $\left\vert \sum\right\vert =0$. We will consider the
$\left\vert \rho\right\vert =1$ case later and assume for now that
(\ref{covMatrixDef}) is nonsingular. An excellent source for relations on
multivariate normal distributions is Tong \cite{TO}.

We let $f_{X_{1}/X_{2}}\left(  x\right)  $ be the density, $F_{X_{1}/X_{2}%
}\left(  x\right)  ,$ the (cumulative) distribution function for the variable
$R$. While the random variables $X_{1}$ and $X_{2}$ can take on any values in
$\mathbb{R}$, the primary interest is in positive values which we consider
below, with similar results for the remaining quadrants of $\mathbb{R}^{2}$
presented in Appendix B.

\bigskip

The issue of the quotient of normal variables has been studied in a number of
contexts, as it arises in a number of biological and physical problems
including constructing the genome mapping of plants, imaging ventilation with
inert flourinated gases, and various neurological applications \cite{PA}. A
classical problem is to estimate parameters (e.g., mean and variance) of the
ratio of two populations. An early result by Geary \cite{GE} concerned the
ratio of two independent normals with zero means. Hinkley \cite{HI, HI2}
obtained a result in the limit as the variance of the denominator approached
zero. Kuethe \cite{KU} and Marsaglia \cite{MA} developed complex expressions
for the density of the ratio of two independent normals with strictly positive
means and variances.

Diaz-Frances and Rubio \cite{DR} obtained an important result by proving a
theorem that establishes bounds on the difference between the true
distribution and the proposed approximation. They summarize a number of
results and earlier works, noting that the ratio of independent normals with
positive means has no finite moments, and that the shape of the distribution
of the quotient "can be bimodal, asymmetric, symmetric, and even close to a
normal distribution, depending largely on the values of the coefficient of
variation of the denominator."

Much of the recent focus has been on approximating the ratio of the means by a
normal distribution; see Diaz-Frances and Rubio \cite{DR} and Diaz-Frances and
Sprott \cite{DS} for a discussion and references, and also Watson \cite{WA},
Palomino et al. \cite{PA}, Schneeweiss \cite{SC}, and Chamberlin and Sprott
\cite{CS}. For the most part the results involve the mid-range, where a normal
approximation is possible, rather than the tail.

The main focus of our paper will be on the tail of the distribution. Although
the terminology is not yet standard, one can broadly define "heavy tails" or
leptokurtosis as distributions with falloff less rapid than the normal, while
"fat tails" consist of power-laws.

While there has been some evidence that the ratio of independent normals,
under some conditions on the parameters, will be fat-tailed, there there has
not been a comprehensive understanding and analysis of the behavior of the
tail of the distribution. In Section 4 we prove that the ratio of two normals,
$R:=X_{1}/X_{2}$ with arbitrary means, variances and correlations $\rho
\in\lbrack-1,1)$ has the fat-tail property, with the density, $f_{X_{1}/X_{2}%
}\left(  x\right)  $ approaching zero as $x^{-2}.$ We also prove bounds on the
coefficient for large $\left\vert x\right\vert $, providing a rigorous
description of the tail of the quotient of normals under very general conditions.

\section{Calculation of the probability density of $X_{1}/X_{2}$, conditioned
on positive $X_{1}$\ and $X_{2}$\ .}

We assume $\mu_{1},\mu_{2},\sigma_{1}$ and $\sigma_{2}$ are all strictly
positive. Define
\begin{equation}
\phi\left(  z\right)  :=\left(  2\pi\right)  ^{-1/2}e^{-z^{2}/2}%
,\ \ \Phi\left(  z\right)  :=\int_{-\infty}^{z}\left(  2\pi\right)
^{-1/2}e^{-u^{2}/2}du.
\end{equation}
We note that the following probabilities are identical for $x>0$ (and zero for
$x\leq0$)%

\begin{gather}
\mathbb{P}\left\{  X_{1},X_{2}>0\ and\ \frac{X_{1}}{X_{2}}\leq x\right\}
=\mathbb{P}\left\{  X_{1},X_{2}>0\ and\ X_{1}\leq xX_{2}\right\} \nonumber\\
=\int_{0}^{\infty}ds_{2}\int_{0}^{xs_{2}}ds_{1}f\left(  s_{1},s_{2},\mu
_{1},\mu_{2},\Sigma\right) \nonumber\\
=\int_{0}^{\infty}ds_{2}\int_{0}^{xs_{2}}ds_{1}\frac{1}{\sigma_{1}\sigma
_{2}\sqrt{1-\rho^{2}}}\phi\left(  \frac{s_{2}-\mu_{2}}{\sigma_{2}}\right)
\phi\left(  \frac{s_{1}-\left(  \mu_{1}+\frac{\rho\sigma_{1}}{\sigma_{2}%
}\left(  s_{2}-\mu_{2}\right)  \right)  }{\sigma_{1}\sqrt{1-\rho^{2}}}\right)
\end{gather}
Various expressions for the density of bivariate and multivariate normals,
such as the one above, can be found in \cite{TO}.

In order to obtain the conditional density we use the identity%
\begin{equation}
\frac{d}{dx}\int_{0}^{xs_{2}}ds_{1}g\left(  s_{1},s_{2}\right)  =s_{2}g\left(
xs_{2},s_{2}\right)
\end{equation}
together with an implication of the Dominated Convergence Theorem to write%
\begin{align}
&  \frac{d}{dx}\mathbb{P}\left\{  X_{1},X_{2}>0\ and\ \frac{X_{1}}{X_{2}}\leq
x\right\} \nonumber\\
&  =\int_{0}^{\infty}s_{2}ds_{2}\frac{1}{\sigma_{1}\sigma_{2}\sqrt{1-\rho^{2}%
}}\phi\left(  \frac{s_{2}-\mu_{2}}{\sigma_{2}}\right)  \phi\left(
\frac{xs_{2}-\left(  \mu_{1}+\frac{\rho\sigma_{1}}{\sigma_{2}}\left(
s_{2}-\mu_{2}\right)  \right)  }{\sigma_{1}\sqrt{1-\rho^{2}}}\right)
\nonumber\\
&  =\frac{\left(  2\pi\right)  ^{-1}}{\sigma_{1}\sigma_{2}\sqrt{1-\rho^{2}}%
}\int_{0}^{\infty}s_{2}ds_{2}e^{-\frac{1}{2}\left(  \frac{s_{2}-\mu_{2}%
}{\sigma_{2}}\right)  ^{2}}e^{-\frac{1}{2}\left(  \frac{xs_{2}-\left(  \mu
_{1}+\frac{\rho\sigma_{1}}{\sigma_{2}}\left(  s_{2}-\mu_{2}\right)  \right)
}{\sigma_{1}\sqrt{1-\rho^{2}}}\right)  ^{2}}. \label{Q1half}%
\end{align}

Defining the quantities%

\begin{equation}
Q\left(  s\right)  :=\left(  \frac{s-\mu_{2}}{\sigma_{2}}\right)  ^{2}+\left(
\frac{xs-\left(  \mu_{1}+\frac{\rho\sigma_{1}}{\sigma_{2}}\left(  s-\mu
_{2}\right)  \right)  }{\sigma_{1}\sqrt{1-\rho^{2}}}\right)  ^{2}=As^{2}+2Bs+C
\end{equation}

\begin{align}
A  &  :=\sigma_{2}^{-2}+\sigma_{1}^{-2}\left(  1-\rho^{2}\right)  ^{-1}\left(
x-\rho\frac{\sigma_{1}}{\sigma_{2}}\right)  ^{2},\nonumber\\
B  &  :=-\sigma_{2}^{-2}\mu_{2}+\sigma_{1}^{-2}\left(  1-\rho^{2}\right)
^{-1}\left(  x-\rho\frac{\sigma_{1}}{\sigma_{2}}\right)  \left(  \mu_{2}%
\rho\frac{\sigma_{1}}{\sigma_{2}}-\mu_{1}\right)  ,\nonumber\\
C  &  :=\frac{\mu_{2}^{2}}{\sigma_{2}^{2}}+\left(  1-\rho^{2}\right)
^{-1}\left(  \rho\frac{\mu_{2}}{\sigma_{2}}-\frac{\mu_{1}}{\sigma_{1}}\right)
^{2}.
\end{align}%
\begin{equation}
Q\left(  s\right)  :=A\left(  s+\frac{B}{A}\right)  ^{2}-\frac{B^{2}}{A}+C
\end{equation}
we can express this equation as%
\begin{align}
\frac{d}{dx}\mathbb{P}\left\{  ...\right\}   &  =\frac{\left(  2\pi\right)
^{-1}}{\sigma_{1}\sigma_{2}\sqrt{1-\rho^{2}}}\int_{0}^{\infty}sdse^{-\frac
{1}{2}Q\left(  s\right)  }\nonumber\\
&  =\frac{\left(  2\pi\right)  ^{-1}}{\sigma_{1}\sigma_{2}\sqrt{1-\rho^{2}}%
}e^{\frac{1}{2}\left(  \frac{B^{2}}{A}-C\right)  }\int_{0}^{\infty}%
se^{-\frac{1}{2}A\left(  s+\frac{B}{A}\right)  ^{2}}ds.
\end{align}

Let $z:=\frac{1}{\sqrt{2}}A^{1/2}\left(  s+\frac{B}{A}\right)  $ so $dz=$
$\frac{1}{\sqrt{2}}A^{1/2}ds$ and $s=\frac{\sqrt{2}}{A^{1/2}}z-\frac{B}{A}$ to
transform the integral and obtain%
\begin{align}
\int_{0}^{\infty}se^{-\frac{1}{2}A\left(  s+\frac{B}{A}\right)  ^{2}}ds &
=\frac{1}{A}e^{-\frac{B^{2}}{2A}}-\sqrt{\frac{\pi}{2}}\frac{B}{A^{3/2}%
}erfc\left(  \frac{B}{\sqrt{2}A^{1/2}}\right)  \\
erfc\left(  z\right)   &  :=1-erf\left(  z\right)  =\frac{2}{\sqrt{\pi}}%
\int_{z}^{\infty}e^{-u^{2}}du.\nonumber
\end{align}

We have then the identity
\begin{gather}
\frac{d}{dx}\mathbb{P}\left\{  X_{1},X_{2}>0\ and\ \frac{X_{1}}{X_{2}}\leq
x\right\}  =\frac{\left(  2\pi\right)  ^{-1}}{\sigma_{1}\sigma_{2}\sqrt
{1-\rho^{2}}}\int_{0}^{\infty}sdse^{-\frac{1}{2}Q\left(  s\right)
}\nonumber\\
=\frac{\left(  2\pi\right)  ^{-1}}{\sigma_{1}\sigma_{2}\sqrt{1-\rho^{2}}%
}e^{\frac{B^{2}}{2A}-\frac{C}{2}}\left\{  \frac{1}{A}e^{-\frac{B^{2}}{2A}%
}-\sqrt{\frac{\pi}{2}}\frac{B}{A^{3/2}}erfc\left(  \frac{B}{\sqrt{2}A^{1/2}%
}\right)  \right\}
\end{gather}

Hence, the density of $X_{1}/X_{2}$ conditioned on $X_{1},X_{2}>0$ is given by
the basic relation (see \cite{BI})
\begin{align}
&  f_{X_{1}/X_{2}}\left(  x\ |\ Q_{1}\right) \nonumber\\
&  =\frac{\left(  2\pi\right)  ^{-1}}{\mathbb{P}\left(  Q_{1}\right)
\sigma_{1}\sigma_{2}\sqrt{1-\rho^{2}}}e^{\frac{B^{2}}{2A}-\frac{C}{2}}\left\{
\frac{1}{A}e^{-\frac{B^{2}}{2A}}-\sqrt{\frac{\pi}{2}}\frac{B}{A^{3/2}%
}erfc\left(  \frac{B}{\sqrt{2}A^{1/2}}\right)  \right\}
\end{align}
where $Q_{1}$ is the set $X_{1}>0,$ $X_{2}>0.$ The calculation of
$\mathbb{P}\left(  Q_{1}\right)  $ is carried out in Appendix A.

We can use $\omega:=B/\left(  2A\right)  ^{1/2}$ to write the following exact
expression for the conditional density:
\begin{align}
f_{X_{1}/X_{2}}\left(  x\ |\ Q_{1}\right)   &  =\frac{\left(  2\pi\right)
^{-1}}{\mathbb{P}\left(  Q_{1}\right)  \sigma_{1}\sigma_{2}\sqrt{1-\rho^{2}}%
}\frac{1}{A}e^{-\frac{C}{2}}h\left(  \omega\right)  ,\ \ \nonumber\\
h\left(  \omega\right)   &  :=e^{\omega^{2}}\left\{  e^{-\omega^{2}}-\sqrt
{\pi}\omega~erfc\left(  \omega\right)  \right\}  =1-\sqrt{\pi}\omega
e^{\omega^{2}}~erfc\left(  \omega\right)  .
\end{align}

\bigskip

\section{\ Analysis of the logarithm of the conditional density}

In order to extract the behavior of the conditional density for large
$\left\vert x\right\vert $ we analyze its logarithm.

\begin{theorem}
For $x\geq x_{0}:=2\frac{\sigma_{1}}{\sigma_{2}}$ one has the bounds%
\begin{align}
\frac{\log f_{X_{1}/X_{2}}\left(  x\ |\ Q_{1}\right)  }{\log x}  &
=-\frac{\frac{C}{2}}{\log x}+\frac{\log\left[  \frac{\left(  2\pi\right)
^{-1}}{\mathbb{P}\left(  Q_{1}\right)  }\right]  }{\log x}+\frac{\log h\left(
\omega\right)  }{\log x}-\frac{\log\left[  \sigma_{1}\sigma_{2}\sqrt
{1-\rho^{2}}A\right]  }{\log x}\nonumber\\
&  =-\frac{\frac{C}{2}}{\log x}+\frac{\log\left[  \frac{\left(  2\pi\right)
^{-1}}{\mathbb{P}\left(  Q_{1}\right)  }\right]  }{\log x}-\frac{\log\left(
\frac{\sigma_{2}}{\sigma_{1}}\frac{1}{\sqrt{1-\rho^{2}}}\right)  }{\log
x}\nonumber\\
&  +\frac{\log h\left(  \omega\right)  }{\log x}-\left(  2+2\frac
{-\frac{\sigma_{1}}{\sigma_{2}}\rho\frac{1}{x}}{\log x}\right)  +R\left(
x;\frac{\sigma_{1}}{\sigma_{2}}\right)  ,
\end{align}
where $\left\vert R\left(  x;\frac{\sigma_{1}}{\sigma_{2}}\right)  \right\vert
\leq\frac{\left(  \frac{\sigma_{1}}{\sigma_{2}}\right)  ^{2}}{x_{0}^{2}\log
x_{0}}$\ .
\end{theorem}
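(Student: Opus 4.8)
The plan is to take the logarithm of the exact conditional density
$f_{X_{1}/X_{2}}(x\,|\,Q_{1})=\frac{(2\pi)^{-1}}{\mathbb{P}(Q_{1})\sigma_{1}\sigma_{2}\sqrt{1-\rho^{2}}}\frac{1}{A}e^{-C/2}h(\omega)$
derived in Section 3 and divide by $\log x$. Since $C$, $\mathbb{P}(Q_{1})$ and the factor $(2\pi)^{-1}$ are independent of $x$, and $h(\omega)$ is carried along untouched, the first displayed line of the theorem is immediate: it is just $\log$ of a product rewritten as a sum of logs, divided by $\log x$, with $-\log[\sigma_{1}\sigma_{2}\sqrt{1-\rho^{2}}]-\log A$ grouped as $-\log[\sigma_{1}\sigma_{2}\sqrt{1-\rho^{2}}A]$. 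The entire content of the theorem therefore reduces to analyzing the single $x$-dependent algebraic term $-\log[\sigma_{1}\sigma_{2}\sqrt{1-\rho^{2}}A]/\log x$.

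First I would substitute the definition of $A$ and reorganize. The key algebraic identity is
$\sigma_{1}\sigma_{2}\sqrt{1-\rho^{2}}\,A=\frac{\sigma_{2}}{\sigma_{1}\sqrt{1-\rho^{2}}}\big[(x-\rho\frac{\sigma_{1}}{\sigma_{2}})^{2}+\frac{\sigma_{1}^{2}}{\sigma_{2}^{2}}(1-\rho^{2})\big]=\frac{\sigma_{2}}{\sigma_{1}\sqrt{1-\rho^{2}}}\big(x^{2}-2\rho\frac{\sigma_{1}}{\sigma_{2}}x+\frac{\sigma_{1}^{2}}{\sigma_{2}^{2}}\big)$,
a direct expansion. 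Taking logs splits off the constant factor $\frac{\sigma_{2}}{\sigma_{1}}\frac{1}{\sqrt{1-\rho^{2}}}$ (this produces the third term of the theorem) and leaves $\log(x^{2}-2\rho\frac{\sigma_{1}}{\sigma_{2}}x+\frac{\sigma_{1}^{2}}{\sigma_{2}^{2}})=2\log x+\log(1+u)$ with $u:=-2\rho\frac{\sigma_{1}}{\sigma_{2}}\frac{1}{x}+\frac{\sigma_{1}^{2}}{\sigma_{2}^{2}}\frac{1}{x^{2}}$. Dividing by $\log x$ yields the constant $-2$ and a term $-\log(1+u)/\log x$. Extracting the linear-in-$1/x$ part of $\log(1+u)$, namely $-2\rho\frac{\sigma_{1}}{\sigma_{2}}\frac{1}{x}$, reproduces exactly the explicit correction $-2\cdot\frac{-\frac{\sigma_{1}}{\sigma_{2}}\rho/x}{\log x}$ in the statement, and everything beyond linear order is collected into $R(x;\frac{\sigma_{1}}{\sigma_{2}})=-\frac{1}{\log x}\big[\log(1+u)+2\rho\frac{\sigma_{1}}{\sigma_{2}}\frac{1}{x}\big]$.

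The main work, and the one delicate step, is the remainder bound. Writing $t:=1/x$, $r:=\sigma_{1}/\sigma_{2}$ and $\psi(t):=\log(1-2\rho r t+r^{2}t^{2})+2\rho r t$, one checks $\psi(0)=0$ and $\psi'(0)=0$, so a second-order Taylor expansion gives $\psi(t)=\tfrac12\psi''(\xi)t^{2}$ for some $\xi\in(0,t)$. The argument of the logarithm factors as $(1-\rho r t)^{2}+r^{2}t^{2}(1-\rho^{2})\geq(1-\rho r t)^{2}$, which for $t\leq 1/x_{0}=\frac{\sigma_{2}}{2\sigma_{1}}$ (so $\rho r t\leq\tfrac12$) is bounded away from $0$; this keeps $\psi''$ controlled and yields $|\psi(t)|\leq r^{2}t^{2}$, i.e. $|R|\leq\frac{r^{2}}{x^{2}\log x}$. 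Since $x\mapsto x^{2}\log x$ is increasing for $x\geq x_{0}>1$, the bound reduces to its value at $x_{0}$, giving $|R|\leq\frac{(\sigma_{1}/\sigma_{2})^{2}}{x_{0}^{2}\log x_{0}}$ as claimed. I expect the genuine obstacle to be tracking the constant cleanly through $\psi''$: because the leading behavior is $\psi(t)\sim r^{2}t^{2}(1-2\rho^{2})$ with $|1-2\rho^{2}|\leq1$, the constant $1$ is exactly right asymptotically, but one must exploit the sum-of-squares lower bound on the denominator (most binding as $\rho\to1$, $x\to x_{0}$) to prevent the higher-order terms from overshooting it. The remaining pieces are routine algebra and the elementary expansion of $\log(1+u)$.
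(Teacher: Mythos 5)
Your overall route is the paper's: take the logarithm of the exact conditional density, observe that the only $x$-dependent algebraic piece besides $\log h(\omega)$ is $-\log\left[\sigma_{1}\sigma_{2}\sqrt{1-\rho^{2}}A\right]/\log x$, and Taylor-expand it about $x=\infty$. The only real difference is organizational: the paper factors $\sigma_{1}\sigma_{2}\sqrt{1-\rho^{2}}A=\frac{\sigma_{2}}{\sigma_{1}}\frac{(x-\rho\sigma_{1}/\sigma_{2})^{2}}{\sqrt{1-\rho^{2}}}\left[1+\cdots\right]$ and bounds two remainders separately, each by half the total, whereas you expand the quadratic to $x^{2}-2\rho\frac{\sigma_{1}}{\sigma_{2}}x+\frac{\sigma_{1}^{2}}{\sigma_{2}^{2}}$ and absorb everything into a single remainder $\psi(t)=\log(1-2\rho rt+r^{2}t^{2})+2\rho rt$ with $t=1/x$, $r=\sigma_{1}/\sigma_{2}$. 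That identity is correct and arguably cleaner.

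The gap is the claimed bound $|\psi(t)|\le r^{2}t^{2}$. It is false for $\rho$ near $+1$. At $\rho=1$ one has $\psi(t)=2\log(1-rt)+2rt=-(rt)^{2}-\tfrac{2}{3}(rt)^{3}-\cdots$, so $|\psi(t)|>(rt)^{2}$ for every $t>0$; at $t=1/x_{0}$ (so $rt=1/2$) this gives $|\psi|=2\log 2-1\approx 0.386$ against $r^{2}t^{2}=0.25$. By continuity the bound already fails for $\rho\gtrsim 0.95$ at $x=x_{0}$, and with it your chain $|R|\le r^{2}/(x^{2}\log x)\le r^{2}/(x_{0}^{2}\log x_{0})$. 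The difficulty sits exactly where you predicted ($\rho\to 1$, $x\to x_{0}$), but the sum-of-squares lower bound $g\ge(1-\rho rt)^{2}\ge 1/4$ does not rescue it: since $\psi''(0)=2r^{2}(1-2\rho^{2})$ has modulus exactly $2r^{2}$ at $\rho^{2}=1$, there is no slack left for the cubic term, which at $\rho=+1$ has the same sign as the quadratic one. To repair the argument you must either restrict to $x\ge 2x_{0}$, so that $rt\le 1/4$ and $|\psi(t)|\le(rt)^{2}(1+rt)$ comfortably yields the stated constant (this is in effect what the paper does: its displayed estimates are all carried out for $x\ge 2x_{0}$ despite the hypothesis $x\ge x_{0}$), or weaken the constant in the bound on $R$. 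Note also that your monotonicity step uses $x_{0}>1$, which the statement of this theorem, unlike the next one, does not guarantee.
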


\begin{proof}
Re-grouping the constants, and taking the logarithm of $f_{X_{1}/X_{2}}\left(
x\ |\ Q_{1}\right)  $, we write%
\begin{equation}
\log f_{X_{1}/X_{2}}\left(  x\ |\ Q_{1}\right)  =-\frac{C}{2}+\log\left[
\frac{\left(  2\pi\right)  ^{-1}}{\mathbb{P}\left(  Q_{1}\right)  }\right]
+\log h\left(  \omega\right)  -\log\left[  \sigma_{1}\sigma_{2}\sqrt
{1-\rho^{2}}A\right]  . \label{logFromThm4.1}%
\end{equation}
The decay exponent will be determined by the large $x$ limit of this quantity
divided by $\log x.$ We first analyze the last term in the expression above:%
\begin{equation}
\sigma_{1}\sigma_{2}\sqrt{1-\rho^{2}}A=\frac{\sigma_{2}}{\sigma_{1}}%
\frac{\left(  x-\frac{\sigma_{1}}{\sigma_{2}}\rho\right)  ^{2}}{\sqrt
{1-\rho^{2}}}\left[  1+\frac{\left(  \frac{\sigma_{1}}{\sigma_{2}}\right)
^{2}\left(  1-\rho^{2}\right)  }{\left(  x-\frac{\sigma_{1}}{\sigma_{2}}%
\rho\right)  ^{2}}\right]
\end{equation}
Taking the logarithm and dividing by $\log x,$ we have%
\begin{align}
&  \frac{\log\left[  \sigma_{1}\sigma_{2}\sqrt{1-\rho^{2}}A\right]  }{\log
x}\nonumber\\
&  =\frac{\log\left[  \frac{\sigma_{2}}{\sigma_{1}}\frac{1}{\sqrt{1-\rho^{2}}%
}\right]  }{\log x}+2\frac{\log\left(  x-\frac{\sigma_{1}}{\sigma_{2}}%
\rho\right)  }{\log x}+\frac{\log\left[  1+\frac{\left(  \frac{\sigma_{1}%
}{\sigma_{2}}\right)  ^{2}\left(  1-\rho^{2}\right)  }{\left(  x-\frac
{\sigma_{1}}{\sigma_{2}}\rho\right)  ^{2}}\right]  }{\log x} \label{logThm4.1}%
\end{align}
We examine the last two terms for $x\geq2x_{0}$%
\begin{align}
2\frac{\log\left(  x-\frac{\sigma_{1}}{\sigma_{2}}\rho\right)  }{\log x}  &
=2+2\frac{\log\left(  1-\frac{\sigma_{1}}{\sigma_{2}}\rho\frac{1}{x}\right)
}{\log x}\nonumber\\
\left\vert 2\frac{\log\left(  x-\frac{\sigma_{1}}{\sigma_{2}}\rho\right)
}{\log x}-\left(  2+2\frac{-\frac{\sigma_{1}}{\sigma_{2}}\rho\frac{1}{x}}{\log
x}\right)  \right\vert  &  \leq\frac{1}{2}\frac{\left(  \frac{\sigma_{1}%
}{\sigma_{2}}\rho\right)  ^{2}}{x_{0}^{2}\log x_{0}}\leq\frac{1}{2}R\left(
x_{0};\frac{\sigma_{1}}{\sigma_{2}}\right)
\end{align}
The last logarithm in (\ref{logThm4.1}) can be bounded (for $x\geq2x_{0}$) as
\begin{equation}
\left\vert \frac{\log\left[  1+\frac{\left(  \frac{\sigma_{1}}{\sigma_{2}%
}\right)  ^{2}\left(  1-\rho^{2}\right)  }{\left\{  x-\frac{\sigma_{1}}%
{\sigma_{2}}\rho\right\}  ^{2}}\right]  }{\log x}\right\vert \leq\frac{\left(
\frac{\sigma_{1}}{\sigma_{2}}\right)  ^{2}\left(  1-\rho^{2}\right)
}{\left\{  x_{0}-\frac{\sigma_{1}}{\sigma_{2}}\rho\right\}  ^{2}\log x_{0}%
}\leq\frac{1}{2}R\left(  x_{0};\frac{\sigma_{1}}{\sigma_{2}}\right)
\end{equation}
Thus one has%
\begin{align}
&  \left\vert -\frac{\log\left[  \sigma_{1}\sigma_{2}\sqrt{1-\rho^{2}%
}A\right]  }{\log x}-\left[  -2+2\frac{\frac{\sigma_{1}}{\sigma_{2}}\rho}{\log
x}+\frac{\log\left[  \frac{\sigma_{2}}{\sigma_{1}}\frac{1}{\sqrt{1-\rho^{2}}%
}\right]  }{\log x}\right]  \right\vert \nonumber\\
&  \leq R\left(  x_{0};\frac{\sigma_{1}}{\sigma_{2}}\right)  ,
\end{align}
concluding the proof.
\end{proof}

In order to extract the relevant part of $\log\omega$ we write
\begin{equation}
\omega_{0}:=\frac{1}{2^{1/2}}\left(  1-\rho^{2}\right)  ^{-1/2}\left(
\rho\frac{\mu_{2}}{\sigma_{2}}-\frac{\mu_{1}}{\sigma_{1}}\right)  ,\text{
}-\frac{C}{2}=-\frac{1}{2}\frac{\mu_{2}^{2}}{\sigma_{2}^{2}}-\omega_{0}^{2},
\end{equation}
and use this to extract the $x-$dependent part of the conditional density.

\bigskip

\begin{theorem}
(a) For $x>x_{0}:=\max\left\{  2\frac{\sigma_{1}}{\sigma_{2}},1\right\}  $ one
has the bounds%
\begin{align}
&  \frac{\log f_{X_{1}/X_{2}}\left(  x\ |\ Q_{1}\right)  }{\log x}\nonumber\\
&  =-\frac{\frac{C}{2}}{\log x}+\frac{\log\left[  \frac{\left(  2\pi\right)
^{-1}}{\mathbb{P}\left(  Q_{1}\right)  }\right]  }{\log x}+\frac{\log h\left(
\omega\right)  }{\log x}-\frac{\log\left[  \sigma_{1}\sigma_{2}\sqrt
{1-\rho^{2}}A\right]  }{\log x}\nonumber\\
&  =-2-\frac{\frac{C}{2}}{\log x}+\frac{\log\left[  \frac{\left(  2\pi\right)
^{-1}}{\mathbb{P}\left(  Q_{1}\right)  }\right]  }{\log x}-\frac{\log\left(
\frac{\sigma_{2}}{\sigma_{1}}\frac{1}{\sqrt{1-\rho^{2}}}\right)  }{\log
x}+\frac{\log h\left(  \omega_{0}\right)  }{\log x}+R_{4}\left(  x_{0}\right)
\end{align}
where $R_{4}$ satisfies $\left\vert R_{4}\left(  x_{0}\right)  \right\vert
\leq$\ $\frac{C_{1}}{x_{0}\log x_{0}}$ and more specifically
\begin{equation}
\left\vert R_{4}\left(  x_{0}\right)  \right\vert \leq R_{1}\left(
x_{0};\frac{\sigma_{1}}{\sigma_{2}}\left\vert \rho\right\vert \right)
+R_{2}\left(  x_{0};\frac{\sigma_{1}}{\sigma_{2}}\right)  +R_{3}\left(
x_{0};\left\vert \frac{b_{1}}{a_{1}}\right\vert ,a_{2}\right)
\end{equation}
where$\ \ R_{1}\left(  x;\frac{\sigma_{1}}{\sigma_{2}}\left\vert
\rho\right\vert \right)  :=2\frac{\frac{\sigma_{1}}{\sigma_{2}}\left\vert
\rho\right\vert }{x\log x}$ \ \ $R_{2}\left(  x;\frac{\sigma_{1}}{\sigma_{2}%
}\right)  :=2\frac{\left(  \frac{\sigma_{1}}{\sigma_{2}}\right)  ^{2}}{x^{2}}$
and
\begin{equation}
R_{3}\left(  x;\left\vert \frac{b_{1}}{a_{1}}\right\vert ,a_{2}\right)
:=\left\vert \omega_{0}\right\vert \left\{  \left\vert \frac{b_{1}}{a_{1}%
}\right\vert +\frac{1}{2a_{2}^{2}}+\frac{1}{8a_{2}^{4}}\right\}  \left\vert
h^{\prime}\left(  \zeta\right)  \right\vert \frac{1}{x_{0}\log x_{0}}\leq
\frac{C_{1}}{x_{0}\log x_{0}}%
\end{equation}
where $a_{1},a_{2},b_{1}$ and $\zeta$ are defined below.

(b) In the limit one has%
\begin{equation}
\lim_{x\rightarrow\infty}\frac{\log f_{X_{1}/X_{2}}\left(  x\ |\ Q_{1}\right)
}{\log x}=-2\ ,
\end{equation}
and the density can be expressed in the form%
\begin{equation}
f_{X_{1}/X_{2}}\left(  x\ |\ Q_{1}\right)  \simeq f_{0}x^{-2}%
\end{equation}%
\[
f_{0}:=\frac{\left(  \sigma_{1}/\sigma_{2}\right)  \sqrt{1-\rho^{2}}}%
{2\pi\mathbb{P}\left(  Q_{1}\right)  }e^{-\frac{1}{2}\mu_{2}^{2}/\sigma
_{2}^{2}}\left(  e^{-\omega_{0}^{2}}-\sqrt{\pi}\omega_{0}erfc\left(
\omega_{0}\right)  \right)  .
\]

\end{theorem}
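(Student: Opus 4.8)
The plan is to build directly on Theorem 4.1, whose expansion already isolates the leading exponent $-2$; what remains is (i) to replace the $x$-dependent argument $\omega$ of $h$ by its limiting value $\omega_0$, and (ii) to confirm that every remaining $x$-dependent piece is $O(1/(x\log x))$ or smaller, so that it vanishes in the limit. The crucial preliminary is that $\omega = B/\sqrt{2A}\to\omega_0$ as $x\to\infty$: substituting the definitions of $A$ and $B$ (with $A$ quadratic and $B$ linear in $x$), factoring the leading power of $x$ out of numerator and denominator, and letting $x\to\infty$ leaves exactly $\omega_0=\frac{1}{\sqrt2}(1-\rho^2)^{-1/2}(\rho\mu_2/\sigma_2-\mu_1/\sigma_1)$. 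More quantitatively I would prove $|\omega-\omega_0|\le(\text{const})/x$ by expanding $(1+\epsilon)^{-1/2}$, where $\epsilon$ is the $O(1/x^2)$ ratio of the subleading to the leading coefficient of $A$, and tracking the linear-in-$1/x$ contribution of the constant term of $B$.

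For part (a), I would start from the displayed identity of Theorem 4.1 and make two modifications. First, the explicit term $2\frac{-(\sigma_1/\sigma_2)\rho/x}{\log x}$ appearing there is itself $O(1/(x\log x))$, so I would absorb it into the error, producing $R_1=2\frac{(\sigma_1/\sigma_2)|\rho|}{x\log x}$ and leaving a clean $-2$; the residual $R$ from Theorem 4.1 supplies the $R_2=2(\sigma_1/\sigma_2)^2/x^2$ piece. Second, I would write $\frac{\log h(\omega)}{\log x}=\frac{\log h(\omega_0)}{\log x}+\frac{\log h(\omega)-\log h(\omega_0)}{\log x}$ and control the last term by the mean value theorem, so that $\log h(\omega)-\log h(\omega_0)$ equals $h'(\zeta)$ at an intermediate point $\zeta$ times $\omega-\omega_0=O(1/x)$. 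Collecting the constants that enter $|\omega-\omega_0|$ — the factor $|\omega_0|$, the coefficient $|b_1/a_1|$ of the linear correction, and the $\tfrac{1}{2a_2^2}+\tfrac{1}{8a_2^4}$ terms from the $(1+\epsilon)^{-1/2}$ expansion — yields $R_3$, and then $R_4:=R_1+R_2+R_3=O(1/(x_0\log x_0))$, where $a_1,a_2,b_1$ are read off from $A$ and $B$.

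For part (b) I would simply pass to the limit in the bound of (a): each term of the form $(\text{const})/\log x$ tends to $0$, and $R_4(x_0)\to0$, giving $\log f_{X_1/X_2}(x\mid Q_1)/\log x\to-2$ and hence the power law. To obtain the coefficient $f_0$ I would return to the exact formula $f=\frac{(2\pi)^{-1}}{\mathbb{P}(Q_1)\sigma_1\sigma_2\sqrt{1-\rho^2}}\frac1A e^{-C/2}h(\omega)$, multiply by $x^2$, and let $x\to\infty$: since $A\sim x^2/(\sigma_1^2(1-\rho^2))$ one has $x^2/A\to\sigma_1^2(1-\rho^2)$, while $h(\omega)\to h(\omega_0)$ and $e^{-C/2}=e^{-\mu_2^2/(2\sigma_2^2)}e^{-\omega_0^2}$ by the definition of $\omega_0$. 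Combining the $\sigma$-factors gives the prefactor $(\sigma_1/\sigma_2)\sqrt{1-\rho^2}/(2\pi\mathbb{P}(Q_1))$, and $e^{-\omega_0^2}h(\omega_0)=e^{-\omega_0^2}-\sqrt\pi\,\omega_0\,\mathrm{erfc}(\omega_0)$ reproduces exactly the stated $f_0$.

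The hard part will be the uniform control of $|\omega-\omega_0|$ and its propagation through $h$. Because $h(\omega)=1-\sqrt\pi\,\omega e^{\omega^2}\mathrm{erfc}(\omega)$ decays like $1/(2\omega^2)$ as $\omega\to+\infty$, I must verify that $h$ stays bounded away from $0$ on the relevant interval, so that $\log h$ is differentiable there with $|h'(\zeta)|$ bounded; since $\omega$ and $\omega_0$ lie in a fixed compact set once $x\ge x_0$, $h(\zeta)$ is bounded below and $|h'(\zeta)|$ above, which is precisely what legitimizes the mean-value estimate underlying $R_3$. Making the error bounds genuinely uniform in $x$, rather than merely asymptotic, is the step demanding the most care.
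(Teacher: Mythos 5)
Your proposal follows essentially the same route as the paper's proof: the same decomposition of $\log\left[\sigma_{1}\sigma_{2}\sqrt{1-\rho^{2}}A\right]$ into the constant, the $2\log\left(x-\rho\sigma_{1}/\sigma_{2}\right)$ piece and the $O\left(1/x^{2}\right)$ remainder yielding $R_{1}$ and $R_{2}$, the same Taylor-expansion bound $\left\vert \omega-\omega_{0}\right\vert \leq\left\vert \omega_{0}\right\vert \left\{  \left\vert b_{1}/a_{1}\right\vert /x_{0}+1/\left(2a_{2}^{2}x_{0}^{2}\right)+1/\left(8a_{2}^{4}x_{0}^{4}\right)\right\}$, and the same mean-value estimate on $\log h$ with $h^{\prime}$ controlled by the classical error-function inequalities (which also give the positive lower bound on $h$ that you correctly flag as necessary). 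Your explicit computation of $f_{0}$ by multiplying the exact density by $x^{2}$ and passing to the limit is a sound way to obtain the coefficient, which the paper states without spelling out that step.
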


\begin{proof}
The main issue is to examine the two terms, $\log\left[  \sigma_{1}\sigma
_{2}\sqrt{1-\rho^{2}}A\right]  $ and $\log h\left(  \omega\right)  $ and
extract the part that is significant after division by $\log x$.

$\left(  i\right)  $ One has
\begin{align}
\log\left[  \sigma_{1}\sigma_{2}\sqrt{1-\rho^{2}}A\right]   &  =\log\left(
\frac{\sigma_{2}}{\sigma_{1}}\frac{1}{\sqrt{1-\rho^{2}}}\right)  +2\log\left(
x-\frac{\sigma_{1}}{\sigma_{2}}\rho\right) \nonumber\\
&  +\log\left(  1+\frac{\left(  \frac{\sigma_{1}}{\sigma_{2}}\right)
^{2}\left(  1-\rho^{2}\right)  }{\left(  x-\frac{\sigma_{1}}{\sigma_{2}}%
\rho\right)  }\right)  . \label{logThm4.2}%
\end{align}
Upon dividing by $\log x$ the second of these terms is%
\begin{equation}
\frac{2\log\left(  x-\frac{\sigma_{1}}{\sigma_{2}}\rho\right)  }{\log
x}=2+2\frac{\log\left(  1-\frac{\sigma_{1}}{\sigma_{2}}\rho\frac{1}{x}\right)
}{\log x},
\end{equation}
and the latter term is bounded by%
\begin{equation}
2\left\vert \frac{\log\left(  1-\frac{\sigma_{1}}{\sigma_{2}}\rho\frac{1}%
{x}\right)  }{\log x}\right\vert \leq2\frac{\frac{\sigma_{1}}{\sigma_{2}%
}\left\vert \rho\right\vert }{x\log x}\leq R_{1}\left(  x_{0};\frac{\sigma
_{1}}{\sigma_{2}}\right)  .
\end{equation}
The last of the terms in (\ref{logThm4.2}) can be bounded as%
\begin{align}
0  &  \leq\log\left(  1+\frac{\left(  \frac{\sigma_{1}}{\sigma_{2}}\right)
^{2}\left(  1-\rho^{2}\right)  }{\left(  x-\frac{\sigma_{1}}{\sigma_{2}}%
\rho\right)  }\right)  \leq\frac{1}{2}\frac{\left(  \frac{\sigma_{1}}%
{\sigma_{2}}\right)  ^{2}\left(  1-\rho^{2}\right)  }{\left(  x-\frac
{\sigma_{1}}{\sigma_{2}}\rho\right)  ^{2}}\nonumber\\
&  \leq\frac{2\left(  \frac{\sigma_{1}}{\sigma_{2}}\right)  ^{2}}{x^{2}}.
\end{align}
Hence, one has the bounds%
\begin{equation}
\frac{\log\left(  1+\frac{\left(  \frac{\sigma_{1}}{\sigma_{2}}\right)
^{2}\left(  1-\rho^{2}\right)  }{\left(  x-\frac{\sigma_{1}}{\sigma_{2}}%
\rho\right)  }\right)  }{\log x}\leq R_{2}\left(  x_{0};\frac{\sigma_{1}%
}{\sigma_{2}}\right)  .
\end{equation}%
\begin{equation}
\left\vert -\frac{\log\left[  \sigma_{1}\sigma_{2}\sqrt{1-\rho^{2}}A\right]
}{\log x}+2\right\vert \leq R_{1}\left(  x_{0};\frac{\sigma_{1}}{\sigma_{2}%
}\right)  +R_{2}\left(  x_{0};\frac{\sigma_{1}}{\sigma_{2}}\right)  \ .
\end{equation}
$\left(  ii\right)  $ Next, we examine $\omega-\omega_{0}$ as we focus on
$\log h\left(  \omega\right)  .$ The term $h\left(  \omega\right)  $ has $x$
dependence. Note that $\omega$ can be written as%
\begin{align}
\omega &  \dot{=}\frac{B}{2^{1/2}A^{1/2}}=\frac{-\sigma_{2}^{-2}\mu_{2}%
+\sigma_{1}^{-2}\left(  1-\rho^{2}\right)  ^{-1}\left(  x-\rho\frac{\sigma
_{1}}{\sigma_{2}}\right)  \left(  \mu_{2}\rho\frac{\sigma_{1}}{\sigma_{2}}%
-\mu_{1}\right)  }{\left\{  \sigma_{2}^{-2}+\sigma_{1}^{-2}\left(  1-\rho
^{2}\right)  ^{-1}\left(  x-\rho\frac{\sigma_{1}}{\sigma_{2}}\right)
^{2}\right\}  ^{1/2}}\nonumber\\
&  =\frac{1}{2^{1/2}}\frac{a_{1}+\frac{b_{1}}{x-\rho\sigma_{1}/\sigma_{2}}%
}{\left(  a_{2}^{2}+\frac{1}{\left(  x-\rho\sigma_{1}/\sigma_{2}\right)  ^{2}%
}\right)  ^{1/2}}%
\end{align}
and $\omega_{0}=2^{-1/2}a_{1}/a_{2}$ with the definitions
\begin{align}
a_{1}  &  :=\left(  \frac{\mu_{2}}{\sigma_{2}}\rho-\frac{\mu_{1}}{\sigma_{1}%
}\right)  \frac{\sigma_{2}}{\sigma_{1}}\left(  1-\rho^{2}\right)
^{-1},\ \ b_{1}:=-\frac{\mu_{2}}{\sigma_{2}}\nonumber\\
a_{2}^{2}  &  :=\left(  \frac{\sigma_{2}}{\sigma_{1}}\right)  \left(
1-\rho^{2}\right)  ^{-1}.
\end{align}
Using $P:=\frac{b_{1}}{a_{1}}\frac{1}{x-\rho\sigma_{1}/\sigma_{2}}$ and
$Q^{2}:=\left[  a_{2}^{2}(x-\rho\sigma_{1}/\sigma_{2})^{2}\right]  ^{-1}$ we
write%
\[
\left\vert \omega-\omega_{0}\right\vert =\left\vert \omega_{0}\right\vert
\left\vert \frac{1+P-\left(  1+Q^{2}\right)  ^{1/2}}{\left(  1+Q^{2}\right)
^{1/2}}\right\vert
\]
Basic Taylor series estimates then imply%
\begin{equation}
\left\vert 1+P-\left(  1+Q^{2}\right)  ^{1/2}\right\vert \leq\left\vert
P\right\vert +\frac{1}{2}Q^{2}+\frac{1}{8}Q^{4}.
\end{equation}
Upon using $x\geq2x_{0}\ $so that $x-\rho\sigma_{1}/\sigma_{2}\geq x_{0}$ , we
have the bounds%
\begin{equation}
\left\vert \omega-\omega_{0}\right\vert \leq\left\vert \omega_{0}\right\vert
\left\{  \left\vert \frac{b_{1}}{a_{1}}\right\vert \frac{1}{x_{0}}+\frac
{1}{2a_{2}^{2}}\frac{1}{x_{0}^{2}}+\frac{1}{8}\frac{1}{a_{2}^{4}x_{0}^{4}%
}\right\}  .
\end{equation}
Thus, for large $x$ we can thus write, for $x\geq2x_{0}$
\begin{equation}
\left\vert \omega-\omega_{0}\right\vert \leq\frac{C_{1}}{x_{0}}%
\end{equation}
where $C_{1}$ depends on $\left(  1-\rho^{2}\right)  ,$ $\sigma_{1}/\sigma
_{2}$ and $\left\vert \rho\frac{\mu_{2}}{\sigma_{2}}-\frac{\mu_{1}}{\sigma
_{1}}\right\vert .$

This yields the bound
\begin{equation}
\left\vert \log h\left(  \omega\right)  -\log h\left(  \omega_{0}\right)
\right\vert \leq\left\vert \omega-\omega_{0}\right\vert \ \left\vert
h^{\prime}\left(  \zeta\right)  \right\vert
\end{equation}
where $\zeta:=\inf\left\{  \omega,\omega_{0}\right\}  $. Hence, we have%
\begin{equation}
\left\vert \frac{\log h\left(  \omega\right)  }{\log x}-\frac{\log h\left(
\omega_{0}\right)  }{\log x}\right\vert \leq R_{3}\left(  x_{0};\omega
_{0},\left\vert b_{1}/a_{1}\right\vert ,a_{2}\right)
\end{equation}
where%
\begin{equation}
R_{3}\left(  x_{0};\omega_{0},\left\vert b_{1}/a_{1}\right\vert ,a_{2}\right)
\leq\left\{  \left\vert \frac{b_{1}}{a_{1}}\right\vert +\frac{1}{2a_{2}^{2}%
}+\frac{1}{8a_{2}^{4}}\right\}  \frac{1}{x_{0}}.
\end{equation}
That is, one has the bound, with $C_{2}$ depending on $\omega_{0},\left\vert
b_{1}/a_{1}\right\vert ,a_{2}$ and $\left\vert h^{\prime}\left(  \zeta\right)
\right\vert $,
\begin{equation}
\left\vert \frac{\log h\left(  \omega\right)  }{\log x}-\frac{\log h\left(
\omega_{0}\right)  }{\log x}\right\vert \leq\ \frac{C_{2}}{x_{0}\log x_{0}%
}\left\vert h^{\prime}\left(  \zeta\right)  \right\vert .
\end{equation}
\ The proof is concluded by observing that the $h^{\prime}$ term can be
bounded using the following classical estimates for the error function:
\begin{equation}
\frac{1}{\omega+\sqrt{\omega^{2}+2}}<e^{\omega^{2}}\int_{\omega}^{\infty
}e^{-u^{2}}du\leq\frac{1}{\omega+\sqrt{\omega^{2}+\frac{4}{\pi}}}%
\end{equation}
By basic error function series expansions, one can bound $h\left(
\omega\right)  $ by%
\[
1-\frac{2}{1+\sqrt{1+\frac{2}{\omega^{2}}}}\leq h\left(  \omega\right)
\leq1-\frac{2}{1+\sqrt{1+\frac{4}{\pi\omega^{2}}}}%
\]
For $\omega\geq0,$ one has the additional bounds (see \cite{AB})%
\begin{equation}
2\omega-\frac{2\left(  1+2\omega^{2}\right)  }{\omega+\sqrt{\omega^{2}+4/\pi}%
}\leq h^{\prime}\left(  \omega\right)  \leq2\omega-\frac{2\left(
1+2\omega^{2}\right)  }{\omega+\sqrt{\omega^{2}+2}}\ .
\end{equation}

\end{proof}

We will consider both of the regions $-\omega_{0}>>1$ and $\omega_{0}>>1$
separately after considering the special case of $\rho:=-1.$

\subsection{The Singular Case $\rho:=-1.$}

Recall that our calculations have been under the assumption of a nonsingular
covariance matrix $\Sigma$, i.e. where $\left\vert \rho\right\vert <1$.
Another interesting case is when $\rho=-1$, which corresponds to two
anticorrelated random variables. This is useful in modeling asset pricing as
an increase in demand will often be accompanied by a commensurate decrease in
supply and vice versa. To this end, let $Y\sim\mathcal{N}\left(  0,1\right)  $
be a standard normal random variable and suppose demand and supply then take
the form, for positive $\mu_{1},\mu_{2},\sigma_{1},\sigma_{2},$%
\begin{align}
D  &  =X_{1}\sim N\left(  \mu_{1},\sigma_{1}^{2}\right)  =\mu_{1}+\sigma
_{1}Y\nonumber\\
S  &  =X_{2}\sim N\left(  \mu,\sigma^{2}\right)  =\mu_{1}-\sigma_{2}Y.
\end{align}
Thus, we will have the covariance matrix%
\begin{equation}
\sum=\left(
\begin{array}
[c]{cc}%
\sigma_{1}^{2} & -\sigma_{1}\sigma_{2}\\
-\sigma_{1}\sigma_{2} & \sigma_{2}^{2}%
\end{array}
\right)  .
\end{equation}
Then the price equation becomes%
\[
\frac{1}{P}\frac{dP}{dt}=\frac{D}{S}-1=\frac{\mu_{1}+\sigma_{1}Y}{\mu
_{1}-\sigma_{2}Y}-1=:R-1
\]

Under near-equilibrium conditions, the values $\mu_{1}/\mu_{2}$ and
$\sigma_{1}/\sigma_{2}$ will be near unity. Nevertheless, we prove the
following more general theorem.

\begin{theorem}
Let $\mu_{1},\mu_{2},\sigma_{1},\sigma_{2}$ be strictly positive, $X_{1}%
,X_{2}$ be bivariate normals with correlation $-1.$ Then $R:=X_{1}/X_{2}$ has
density%
\begin{equation}
f_{R}\left(  x\right)  =\frac{\mu_{1}\sigma_{2}+\mu_{2}\sigma_{1}}{\sqrt{2\pi
}}\frac{e^{-\frac{1}{2}\left(  \frac{\mu_{2}x-\mu_{1}}{\sigma_{2}x+\sigma_{1}%
}\right)  ^{2}}}{\left(  \sigma_{2}x+\sigma_{1}\right)  ^{2}}%
\end{equation}
where $f_{R}\left(  -\sigma_{1}/\sigma_{2}\right)  :=0.$

\begin{proof}
Let $Y\sim N\left(  0,1\right)  $ be the standard normal. With $R=\left(
\mu_{1}+\sigma_{1}Y\right)  /\left(  \mu_{2}-\sigma_{2}Y\right)  $ we use the
Theorem of Total Probability to express the distribution function of $R$ as%
\begin{align*}
\mathbb{P}\left\{  R\leq x\right\}   &  =\mathbb{P}\left\{  \frac{\mu
_{1}+\sigma_{1}Y}{\mu_{2}-\sigma_{2}Y}\leq x,\mu_{2}-\sigma_{2}Y>0\right\} \\
&  +\mathbb{P}\left\{  \frac{\mu_{1}+\sigma_{1}Y}{\mu_{2}-\sigma_{2}Y}\leq
x,\mu_{2}-\sigma_{2}Y<0\right\}
\end{align*}
where we neglect sets of measure zero. Abbreviating the two probabilities by
$P_{1}$ and $P_{2}$ respectively, we note the two cases determined by the sign
of $\sigma_{2}x+\sigma_{1}.$ $\left(  i\right)  $ For $\sigma_{2}x+\sigma
_{1}>0$ we have
\[
P_{1}=\mathbb{P}\left\{  Y\leq\frac{\mu_{2}x-\mu_{1}}{\sigma_{2}x+\sigma_{1}%
}\ and\ Y<\frac{\mu_{2}}{\sigma_{2}}\right\}  =\int_{-\infty}^{\frac{\mu
_{2}x-\mu_{1}}{\sigma_{2}x+\sigma_{1}}}f_{Y}\left(  s\right)  ds,
\]%
\[
P_{2}=\mathbb{P}\left\{  Y\geq\frac{\mu_{2}x-\mu_{1}}{\sigma_{2}x+\sigma_{1}%
}\ and\ Y>\frac{\mu_{2}}{\sigma_{2}}\right\}  =\int_{\frac{\mu_{2}}{\sigma
_{2}}}^{\infty}f_{Y}\left(  s\right)  ds.
\]
Differentiating the sum with respect to $x,$ we note that the latter vanishes,
and we obtain, for $\sigma_{2}x+\sigma_{1}>0,$ the density%
\[
f_{R}\left(  x\right)  =f_{Y}\left(  \frac{\mu_{2}x-\mu_{1}}{\sigma
_{2}x+\sigma_{1}}\right)  \frac{d}{dx}\left(  \frac{\mu_{2}x-\mu_{1}}%
{\sigma_{2}x+\sigma_{1}}\right)  =\frac{e^{\frac{-1}{2}\left(  \frac{\mu
_{2}x-\mu_{1}}{\sigma_{2}x+\sigma_{1}}\right)  ^{2}}}{\sqrt{2\pi}}\frac
{\mu_{1}\sigma_{2}+\mu_{2}\sigma_{1}}{\left(  \sigma_{2}x+\sigma_{1}\right)
^{2}}.
\]
$\left(  ii\right)  \ $Next, for $\sigma_{2}x+\sigma_{1}<0$ we use the same
notation to write%
\begin{align*}
P_{1}  &  =\mathbb{P}\left\{  Y\geq\frac{\mu_{2}x-\mu_{1}}{\sigma_{2}%
x+\sigma_{1}}\ and\ Y<\frac{\mu_{2}}{\sigma_{2}}\right\}  =\int_{\frac{\mu
_{2}x-\mu_{1}}{\sigma_{2}x+\sigma_{1}}}^{\frac{\mu_{2}}{\sigma_{2}}}%
f_{Y}\left(  s\right)  ds,\\
P_{2}  &  =\mathbb{P}\left\{  Y\leq\frac{\mu_{2}x-\mu_{1}}{\sigma_{2}%
x+\sigma_{1}}\ and\ Y>\frac{\mu_{2}}{\sigma_{2}}\right\}  =0.
\end{align*}
so that differentiation with respect to $x$ yields the same result as above,
proving the theorem.
\end{proof}

\begin{remark}
In the special case $\mu_{1}=\mu_{2}=\mu$ and $\ \sigma_{1}=\sigma_{2}=\sigma$
the density has the form
\begin{equation}
f_{R}\left(  x\right)  =\sqrt{\frac{2}{\pi}}\frac{\mu}{\sigma}\frac
{e^{-\frac{1}{2}\left(  \frac{\mu}{\sigma}\frac{x-1}{x+1}\right)  ^{2}}%
}{\left(  x+1\right)  ^{2}} \label{exactDensity}%
\end{equation}
where $f_{R}\left(  -1\right)  :=0.$
\end{remark}
\end{theorem}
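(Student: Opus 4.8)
The plan is to exploit the fact that correlation $-1$ forces the covariance matrix to be singular of rank one, so that the whole randomness is carried by a single standard normal. Concretely, I would write $X_1 = \mu_1 + \sigma_1 Y$ and $X_2 = \mu_2 - \sigma_2 Y$ with $Y\sim\mathcal{N}(0,1)$; the opposite signs on $Y$ are exactly what produces correlation $-1$, while the marginals are the prescribed normals. Then $R = X_1/X_2 = g(Y)$, where $g(y) := (\mu_1 + \sigma_1 y)/(\mu_2 - \sigma_2 y)$ is a linear-fractional (M\"obius) transformation with a single pole at $y^{\ast} := \mu_2/\sigma_2$. The density of $R$ is then just the pushforward of the standard normal density under $g$, which reduces the problem to elementary calculus once the monotonicity and range of $g$ are pinned down.

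Next I would record the two facts that make the pushforward clean. First, $g'(y) = (\mu_1\sigma_2 + \mu_2\sigma_1)/(\mu_2 - \sigma_2 y)^2 > 0$, so $g$ is strictly increasing on each of the two intervals $(-\infty, y^{\ast})$ and $(y^{\ast}, \infty)$ into which the pole splits $\mathbb{R}$. Second, since $g(\pm\infty) = -\sigma_1/\sigma_2$, the left branch maps onto $(-\sigma_1/\sigma_2, \infty)$ and the right branch onto $(-\infty, -\sigma_1/\sigma_2)$; these images are disjoint and together exhaust $\mathbb{R}\setminus\{-\sigma_1/\sigma_2\}$. Hence for every $x \neq -\sigma_1/\sigma_2$ there is exactly one preimage, obtained by solving $g(y) = x$ for $y$, namely $t(x) := (\mu_2 x - \mu_1)/(\sigma_2 x + \sigma_1)$, with $t'(x) = (\mu_1\sigma_2 + \mu_2\sigma_1)/(\sigma_2 x + \sigma_1)^2 > 0$.

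With these in hand the density follows from the change-of-variables formula $f_R(x) = \phi(t(x))\,|t'(x)|$, which upon inserting $\phi$ and $t'$ yields precisely the claimed expression; the exceptional value $x = -\sigma_1/\sigma_2$ is a single point and is assigned $f_R = 0$ by convention. Equivalently, and closer in spirit to the earlier theorems, I would compute the distribution function $\mathbb{P}\{R \leq x\}$ directly by the law of total probability, splitting on the sign of the denominator $\mu_2 - \sigma_2 Y$. The one delicate step is the bookkeeping of inequalities as one passes through the pole: multiplying $R \leq x$ by the denominator reverses the inequality when the denominator is negative, and dividing by $\sigma_2 x + \sigma_1$ reverses it again when $x < -\sigma_1/\sigma_2$, so one must track on which side of $y^{\ast}$ the threshold $t(x)$ lies. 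The identity $\mu_2/\sigma_2 - t(x) = (\mu_1\sigma_2 + \mu_2\sigma_1)/[\sigma_2(\sigma_2 x + \sigma_1)]$ settles this in each regime and, crucially, shows that the contribution from the branch not containing $t(x)$ is an $x$-independent constant (a tail probability such as $\Phi(\mu_2/\sigma_2)$ or its complement); that constant drops out upon differentiation, leaving the identical $\phi(t(x))\,t'(x)$ in both the $\sigma_2 x + \sigma_1 > 0$ and $\sigma_2 x + \sigma_1 < 0$ cases. I expect this sign-tracking to be the only place where care is required.
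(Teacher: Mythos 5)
Your proposal is correct, and it contains the paper's own argument as its second route: the paper proves the theorem exactly by writing $R=(\mu_1+\sigma_1Y)/(\mu_2-\sigma_2Y)$, splitting $\mathbb{P}\{R\le x\}$ by the sign of the denominator via total probability, and differentiating in the two regimes $\sigma_2x+\sigma_1\gtrless 0$. Your primary route --- the pushforward of $\phi$ under the M\"obius map $g$, using that $g$ is increasing on each side of the pole $y^\ast=\mu_2/\sigma_2$ with the two branches covering the disjoint ranges $(-\sigma_1/\sigma_2,\infty)$ and $(-\infty,-\sigma_1/\sigma_2)$, so every $x\neq-\sigma_1/\sigma_2$ has the unique preimage $t(x)$ --- is a cleaner packaging of the same computation: it delivers $f_R(x)=\phi(t(x))\,t'(x)$ in one stroke and makes transparent why a single formula holds on all of $\mathbb{R}$. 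It also buys a small amount of robustness: in the regime $\sigma_2x+\sigma_1<0$ your identity $\mu_2/\sigma_2-t(x)=(\mu_1\sigma_2+\mu_2\sigma_1)/[\sigma_2(\sigma_2x+\sigma_1)]$ shows $t(x)>\mu_2/\sigma_2$, so the $x$-dependent contribution comes from the negative-denominator event and the positive-denominator event is the one that is empty --- the paper's case $(ii)$ swaps these two roles in its intermediate display (its stated $P_1$ there would be an integral over an empty interval), though the final density is unaffected since only the $t(x)$ endpoint survives differentiation. Your branch-image argument avoids that bookkeeping slip entirely.
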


\subsection{The Cauchy Limit}

We consider the limit in which $\rho:=0$ and $\mu_{1},\mu_{2}\rightarrow0+$
with $\sigma_{1},\sigma_{2}$ fixed at $1$, $\omega=0,$ $-C/2=-\frac{1}{2}%
\frac{\mu_{2}^{2}}{\sigma_{2}^{2}}-\omega_{0}^{2}=0,$ $\log h\left(
\omega_{0}\right)  =0.$ Since $\rho:=0,$ the random variables $X_{1}$ and
$X_{2}$ are independent, and together with $\mu_{1}=\mu_{2}=0,$ one has
\begin{equation}
\mathbb{P}\left(  Q_{1}\right)  =1/4,~\ \sigma_{1}\sigma_{2}\sqrt{1-\rho^{2}%
}A=1+x^{2}.
\end{equation}
Thus, the expression (\ref{logFromThm4.1}) simplifies to
\begin{equation}
\log f_{X_{1}/X_{2}}\left(  x\ |\ Q_{1}\right)  =\log\frac{2}{\pi\left(
1+x^{2}\right)  }\ .
\end{equation}

Hence, this is a simple proof of the known result that the quotient of two
independent standard normals (i.e., mean $0$ and variance $1$) yields the
Cauchy density%
\begin{equation}
f_{Cauchy}\left(  x\ |\ x>0\right)  =\frac{2}{\pi}\frac{1}{1+x^{2}}.
\end{equation}
Thus, the limit $\mu_{1},\mu_{2}\rightarrow0+$ with $\sigma_{1},\sigma_{2}$
set at $1$ and $\rho$ at $0,$ recovers the classical limit of the Cauchy
density, and one obtains
\begin{equation}
\lim_{x\rightarrow\infty}\frac{\log f_{X_{1}/X_{2}}\left(  x\ |\ Q_{1}\right)
}{\log x}=\lim_{x\rightarrow\infty}\frac{\log f_{Cauchy}\left(  x\ |\ Q_{1}%
\right)  }{\log x}=-2.
\end{equation}

\subsection{The limit of constant denominator}

We consider for positive $X_{1}$ and $X_{2}$ the ratio $X_{1}/X_{2}$ in the
limit in which the denominator approaches a constant, i.e., $\mu_{2}>0$ and
$\sigma_{2}\rightarrow0.$ Note that when $\mu_{1}$ and $\mu_{2}$ are positive
and $\mu_{1}/\sigma_{1}$ and $\mu_{2}/\sigma_{2}$ are large, there is a very
small difference between the density and conditional density.

The basic tool we use is an asymptotic expression for the integral for $b>0$
and $a>>1,$%
\begin{align}
I\left(  a,b\right)   &  :=\int_{0}^{\infty}e^{-a\left(  s-b\right)  ^{2}%
}g\left(  s\right)  ds\tilde{=}\int_{-\infty}^{\infty}e^{-a\left(  s-b\right)
^{2}}g\left(  s\right)  ds\nonumber\\
&  \tilde{=}\int_{-\infty}^{\infty}e^{-a\left(  s-b\right)  ^{2}}g\left(
b\right)  ds.
\end{align}
In other words, the integral will be negligible when $s$ is far from $b,$
since $a>>1.$ Let $z:=a^{1/2}\left(  s-b\right)  $ and $dz=a^{1/2}ds,$ so we
can write%
\begin{equation}
I\left(  a,b\right)  \tilde{=}g\left(  b\right)  \int_{-\infty}^{\infty
}e^{-z^{2}}a^{-1/2}dz=\left(  \frac{\pi}{a}\right)  ^{1/2}g\left(  b\right)  .
\end{equation}
More precisely, for a function $g$ such that $\left\vert g\right\vert \leq1$
and $\left\vert g^{\prime\prime}\right\vert \leq M,$ one can use Taylor series
bounds to obtain%
\begin{equation}
\left\vert I\left(  a,b\right)  -\left(  \frac{\pi}{a}\right)  ^{1/2}g\left(
b\right)  \right\vert \leq\frac{\sqrt{\pi}}{4}Ma^{-3/2}+\frac{1}{a^{1/2}%
b}e^{-ab}\ .
\end{equation}
We start from the early expression (\ref{Q1half}), namely,%
\begin{align}
&  f_{X_{1}/X_{2}}\left(  x~|\ Q_{1}\right)  :=\frac{d}{dx}\mathbb{P}\left\{
X_{1},X_{2}>0\ and\ \frac{X_{1}}{X_{2}}\leq x\right\} \nonumber\\
&  =\frac{\left(  2\pi\right)  ^{-1}}{\sigma_{1}\sigma_{2}\sqrt{1-\rho^{2}}%
}\int_{0}^{\infty}s_{2}ds_{2}e^{-\frac{1}{2}\left(  \frac{s_{2}-\mu_{2}%
}{\sigma_{2}}\right)  ^{2}}e^{-\frac{1}{2}\left(  \frac{xs_{2}-\left(  \mu
_{1}+\frac{\rho\sigma_{1}}{\sigma_{2}}\left(  s_{2}-\mu_{2}\right)  \right)
}{\sigma_{1}\sqrt{1-\rho^{2}}}\right)  ^{2}},
\end{align}
and apply the asymptotic result above with $g$ as the integrand above with
$a:=\left(  2\sigma_{2}^{2}\right)  ^{-1}$ and $b:=\mu_{2}$. This yields (with
$\rho:=0$), with $E_{1}\left(  a,b\right)  $ the error term,%
\begin{align}
&  \int_{0}^{\infty}s_{2}ds_{2}e^{-\frac{1}{2}\left(  \frac{s_{2}-\mu_{2}%
}{\sigma_{2}}\right)  ^{2}}e^{-\frac{1}{2}\left(  \frac{xs_{2}-\left(  \mu
_{1}+\frac{\rho\sigma_{1}}{\sigma_{2}}\left(  s_{2}-\mu_{2}\right)  \right)
}{\sigma_{1}\sqrt{1-\rho^{2}}}\right)  ^{2}}\nonumber\\
&  =\mu_{2}e^{-\frac{1}{2}\left(  \frac{x\mu_{2}-\mu_{1}}{\sigma_{1}}\right)
^{2}}\left(  2\pi\sigma_{2}^{2}\right)  ^{1/2}+E_{1}\left(  a,b\right)
\end{align}
so that substitution into the integral yields%
\begin{align}
f_{X_{1}/X_{2}}\left(  x~|\ Q_{1}\right)   &  =\frac{\left(  2\pi\right)
^{-1}}{\sigma_{1}\sigma_{2}}\mu_{2}e^{-\frac{1}{2}\left(  \frac{x\mu_{2}%
-\mu_{1}}{\sigma_{1}}\right)  ^{2}}\left(  2\pi\sigma_{2}^{2}\right)
^{1/2}+E_{1}\left(  \left(  2\sigma_{2}^{2}\right)  ^{-1},\mu_{2}\right)
\nonumber\\
&  =\frac{\mu_{2}}{\left(  2\pi\right)  ^{1/2}\sigma_{1}}e^{-\frac{1}%
{2}\left(  \frac{x\mu_{2}-\mu_{1}}{\sigma_{1}}\right)  ^{2}}+E_{1}\left(
\left(  2\sigma_{2}^{2}\right)  ^{-1},\mu_{2}\right)  .
\end{align}
The error term is bounded by%
\begin{align}
\left\vert E_{1}\left(  \left(  2\sigma_{2}^{2}\right)  ^{-1},\mu_{2}\right)
\right\vert  &  \leq\frac{1}{2\pi\sigma_{1}\sigma_{2}}\left(  \frac{\sqrt{\pi
}2^{3/2}}{4}M\sigma_{2}^{3/2}+\pi^{-1/2}\frac{2\sigma_{2}^{2}}{\mu_{2}%
}e^{-\left(  2\sigma_{2}^{2}\right)  ^{3/2}\mu_{2}}\right) \nonumber\\
&  =\frac{\sigma_{2}^{1/2}}{2\pi\sigma_{1}}\frac{\sqrt{\pi}2^{3/2}}{4}%
M+\frac{\sigma_{2}}{2\pi\sigma_{1}}\pi^{-1/2}\frac{2}{\mu_{2}}e^{-\left(
2\sigma_{2}^{2}\right)  ^{3/2}\mu_{2}},
\end{align}
and vanishes as $\sigma_{2}^{1/2}\rightarrow0.$

Recall that the density for a $\mathcal{N}\left(  \mu_{1},\sigma_{1}%
^{2}\right)  $ random variable is
\begin{equation}
f_{\mu_{1},\sigma_{1}^{2}}\left(  x\right)  =\frac{1}{\left(  2\pi\right)
^{1/2}\sigma_{1}}e^{-\frac{1}{2}\left(  x-\mu_{1}\right)  /\sigma_{1}^{2}}.
\end{equation}
Given random variables $X_{1}\sim\mathcal{N}\left(  \mu_{1},\sigma_{1}%
^{2}\right)  $ and $X_{2}\sim\mathcal{N}\left(  \mu_{2},\sigma_{2}^{2}\right)
,$ in the limit as $\sigma_{2}\rightarrow0,$ the denominator is simply
division by $\mu_{2}$ (since $X_{2}=\mu_{2}$ at that stage). Thus, we have
that%
\begin{equation}
\frac{1}{\mu_{2}}X_{1}\sim\mathcal{N}\left(  \frac{\mu_{1}}{\mu_{2}},\left(
\frac{1}{\mu_{2}}\right)  ^{2}\sigma_{1}^{2}\right)
\end{equation}
so the density of $X_{1}/\mu_{2}$ is given by%
\begin{equation}
f_{X_{1}/\mu_{2}}\left(  x\right)  =\mu_{2}\frac{1}{\left(  2\pi\right)
^{1/2}\sigma_{1}}e^{-\frac{1}{2}\left(  \mu_{2}x-\mu_{1}\right)  ^{2}%
/\sigma_{1}^{2}}.
\end{equation}
Hence, we see that $f_{X_{1}/\mu_{2}}\left(  x\right)  =f_{X_{1}/X_{2}}\left(
x~|\ Q_{1}\right)  $ in the limit as $\sigma_{2}\rightarrow0$ with $\rho:=0,$
as expected.

\subsection{The limits $-\omega_{0}>>1$ and $\omega_{0}>>1.$}

Recalling the definition of $\omega_{0}$ and $C,$ we have%
\begin{equation}
C=\frac{\mu_{2}^{2}}{\sigma_{2}^{2}}+\left(  1-\rho^{2}\right)  ^{-1}\left(
\rho\frac{\mu_{2}}{\sigma_{2}}-\frac{\mu_{1}}{\sigma_{1}}\right)
^{2},\ \ \omega_{0}:=\frac{1}{2^{1/2}}\left(  1-\rho^{2}\right)
^{-1/2}\left(  \rho\frac{\mu_{2}}{\sigma_{2}}-\frac{\mu_{1}}{\sigma_{1}%
}\right)
\end{equation}
Rewriting \ref{logFromThm4.1} we have the exact expression
\begin{equation}
\log f_{X_{1}/X_{2}}\left(  x\ |\ Q_{1}\right)  =-\frac{C}{2}+\log\left[
\frac{\left(  2\pi\right)  ^{-1}}{\mathbb{P}\left(  Q_{1}\right)  }\right]
+\log h\left(  \omega\right)  -\log\left[  \sigma_{1}\sigma_{2}\sqrt
{1-\rho^{2}}A\right]  .
\end{equation}

$\left(  i\right)  $ \ We consider the region $-\omega_{0}>>1$ which, for
$\sigma_{2},$ $\mu_{1}>0,$ $\mu_{2}>0$ and $\left\vert \rho\right\vert <1$
fixed, means $\sigma_{1}$ is small. Using the standard bounds \cite{AB} for
the error function we write, for $\omega<0,$%
\begin{equation}
\log\left[  2\sqrt{\pi}\left(  -\omega_{0}\right)  e^{\omega_{0}^{2}}\right]
\leq\log h\left(  \omega_{0}\right)  \leq\log\left[  2\sqrt{\pi}\left(
-\omega_{0}\right)  e^{\omega_{0}^{2}}+\frac{1}{2\omega_{0}^{2}}\right]  \ .
\end{equation}
Approximating $\omega$ with $\omega_{0}$ and utilizing the bounds in the
proofs of the theorems above, \ as well as the $-C/2$ identity we write,%
\begin{align}
\log f_{X_{1}/X_{2}}\left(  x\ |\ Q_{1}\right)   &  \tilde{=}-2\log
x+2\rho\frac{\sigma_{1}}{\sigma_{2}}-\log\left(  1-\rho^{2}\right)
^{-1/2}+\log\left(  \frac{\sqrt{\pi}}{\mathbb{P}\left(  Q_{1}\right)  }\right)
\nonumber\\
&  -\frac{1}{2}\frac{\mu_{2}^{2}}{\sigma_{2}^{2}}-\log\left(  \frac{\sigma
_{2}}{\sigma_{1}}\right)  +\log\left(  -\omega_{0}\right)  .
\end{align}
Note that the difference between the two sides vanishes as $x\rightarrow
\infty$. All but the last two terms are clearly bounded for small $\sigma_{1}%
$. The last two are%
\begin{equation}
-\log\left(  \frac{\sigma_{2}}{\sigma_{1}}\right)  +\log\left(  -\omega
_{0}\right)  =\log\left(  \frac{\sigma_{1}}{\sigma_{2}}\left(  -\omega
_{0}\right)  \right)
\end{equation}%
\begin{equation}
=\log\left\{  \left[  2\left(  1-\rho^{2}\right)  \right]  ^{-1/2}\left(
\frac{\mu_{1}}{\sigma_{2}}-\rho\frac{\mu_{2}\sigma_{1}}{\sigma_{2}^{2}%
}\right)  \right\}  ,
\end{equation}
so that the sum of these two are also bounded for small $\sigma_{1}$. One can
then write%
\begin{equation}
\log f_{X_{1}/X_{2}}\left(  x\ |\ Q_{1}\right)  \tilde{=}-2\log x+2\rho
\frac{\sigma_{1}}{\sigma_{2}}+\log\left(  \sqrt{\frac{2}{\pi}}\frac
{1}{\mathbb{P}\left(  Q_{1}\right)  }\right)  +\log\left(  \frac{\mu_{1}%
}{\sigma_{2}}-\rho\frac{\mu_{2}\sigma_{1}}{\sigma_{2}^{2}}\right)
\end{equation}
and one has, provided $\sigma_{1}$ approaches zero more rapidly than $\left(
\log x\right)  ^{-1},$ the limit,%
\begin{equation}
\lim_{\substack{x\rightarrow\infty\\\sigma_{1}\rightarrow0}}\frac{\log
f_{X_{1}/X_{2}}\left(  x\ |\ Q_{1}\right)  }{\log x}=-2.
\end{equation}

$\bigskip$

$\left(  ii\right)  $ Next, we consider the $\omega_{0}>>1$ region, which, for
$\sigma_{1},$ $\mu_{1}>0,$ $\mu_{2}>0$ and $\left\vert \rho\right\vert <1$
fixed, implies $\sigma_{2}$ is small. This is the limit in which the
denominator becomes deterministic, as discussed above. We now attain this
limit from the $\log f_{X_{1}/X_{2}}\left(  x\ |\ Q_{1}\right)  $ expression.
From the identities of $C$ and $\omega_{0}$ above, we have%
\begin{align}
-\frac{C}{2}  &  =-\frac{1}{2}\left(  \frac{\mu_{2}}{\sigma_{2}}\right)
^{2}\ -\left(  \frac{1}{2\left(  1-\rho^{2}\right)  }\right)  \left\{
\rho^{2}\left(  \frac{\mu_{2}}{\sigma_{2}}\right)  ^{2}-2\rho\frac{\mu_{1}%
\mu_{2}}{\sigma_{1}\sigma_{2}}+\frac{\mu_{1}^{2}}{\sigma_{1}^{2}}\ \right\}
\nonumber\\
&  =-\frac{1}{2}\frac{1}{1-\rho^{2}}\left(  \frac{\mu_{2}}{\sigma_{2}}\right)
^{2}+\frac{\rho}{\left(  1-\rho^{2}\right)  }\frac{\mu_{1}\mu_{2}}{\sigma
_{1}\sigma_{2}}-\left(  \frac{1}{2\left(  1-\rho^{2}\right)  }\right)
\frac{\mu_{1}^{2}}{\sigma_{1}^{2}}%
\end{align}
With this expression and $\log h\left(  \omega_{0}\right)  \tilde{=}0,$ the
identity (\ref{logFromThm4.1}) yields
\[
\log f_{X_{1}/X_{2}}\left(  x\ |\ Q_{1}\right)  \tilde{=}-2\log x+2\rho
\frac{\sigma_{1}}{\sigma_{2}}+\log\left(  \frac{\left(  2\pi\right)  ^{-1}%
}{\mathbb{P}\left(  Q_{1}\right)  }\right)  +\log h\left(  \omega_{0}\right)
-\log\left(  \frac{\sigma_{2}}{\sigma_{1}}\frac{1}{\sqrt{1-\rho^{2}}}\right)
\]%
\begin{equation}
-\frac{1}{2}\frac{1}{1-\rho^{2}}\left(  \frac{\mu_{2}}{\sigma_{2}}\right)
^{2}+\frac{\rho}{\left(  1-\rho^{2}\right)  }\frac{\mu_{1}\mu_{2}}{\sigma
_{1}\sigma_{2}}-\left(  \frac{1}{2\left(  1-\rho^{2}\right)  }\right)
\frac{\mu_{1}^{2}}{\sigma_{1}^{2}}%
\end{equation}
As $\sigma_{2}\rightarrow0$ the right hand side of this expression diverges as
$\sigma_{2}^{-2}$. This means that the exponent of the decay rate of
$X_{1}/X_{2}$ diverges to $-\infty$. This is consistent with the previous
result and the expecation that when $\sigma_{2}\rightarrow0$ the denominator
approaches a constant, so that the only randomness is in the Gaussian numerator.

\textbf{\bigskip}

\section{Appendix A: Calculation of $\mathbb{P}\left(  Q_{1}\right)
,...,\mathbb{P}\left(  Q_{4}\right)  $}

We label the quadrants $Q_{1},...,Q_{4}$ in the usual way. We define the
multivariable distribution function $F\left(  x_{1},x_{2}\right)  $ in terms
of the density as%
\begin{equation}
F\left(  x_{1},x_{2}\right)  :=\int_{-\infty}^{0}ds_{1}\int_{-\infty}%
^{0}ds_{2}f\left(  s_{1},s_{2}\right)  .
\end{equation}
We would like to evaluate the probability of $\left(  X_{1},X_{2}\right)  $
being in \ the first quadrant, i.e., in $Q_{1}$ which is%
\begin{equation}
\mathbb{P}\left(  Q_{1}\right)  =\int_{0}^{\infty}ds_{1}\int_{0}^{\infty
}ds_{2}f\left(  s_{1},s_{2}\right)  .
\end{equation}
A basic decomposition of the $\mathbb{R}^{2}$ integrals yields
\begin{equation}
\mathbb{P}\left(  Q_{1}\right)  =1-F\left(  0,\infty\right)  -F\left(
\infty,0\right)  +F\left(  0,0\right)  .
\end{equation}
We now evaluate the right hand side. Recall that $\rho\in\left(  -1,1\right)
.$ Let $\delta_{\rho}:=sgn\rho$ and
\begin{equation}
a_{i}:=\left(  x_{i}-\mu_{i}\right)  /\sigma_{i}%
\end{equation}
so $a_{i}=\infty$ if $x_{i}=\infty$ and $a_{i}=-\mu_{i}/\sigma_{i}$ if
$x_{i}=0.$

We have, using using $\left(  2.2.3\right)  $ in \cite{TO} and $\Phi\left(
\infty\right)  =1,$ the identities
\begin{equation}
F\left(  x_{1},x_{2}\right)  =\int_{-\infty}^{\infty}\Phi\left(  \frac
{\sqrt{\left\vert \rho\right\vert }z+a_{1}}{\sqrt{1-\left\vert \rho\right\vert
}}\right)  \Phi\left(  \frac{\delta_{\rho}\sqrt{\left\vert \rho\right\vert
}z+a_{2}}{\sqrt{1-\left\vert \rho\right\vert }}\right)  \phi\left(  z\right)
dz,
\end{equation}%
\begin{align}
F\left(  0,0\right)   &  =\int_{-\infty}^{\infty}\Phi\left(  \frac
{\sqrt{\left\vert \rho\right\vert }z-\mu_{1}/\sigma_{1}}{\sqrt{1-\left\vert
\rho\right\vert }}\right)  \Phi\left(  \frac{\delta_{\rho}\sqrt{\left\vert
\rho\right\vert }z-\mu_{2}/\sigma_{2}}{\sqrt{1-\left\vert \rho\right\vert }%
}\right)  \phi\left(  z\right)  dz,\nonumber\\
F\left(  \infty,0\right)   &  =\int_{-\infty}^{\infty}\Phi\left(  \frac
{\delta_{\rho}\sqrt{\left\vert \rho\right\vert }z-\mu_{2}/\sigma_{2}}%
{\sqrt{1-\left\vert \rho\right\vert }}\right)  \phi\left(  z\right)  dz,\\
F\left(  0,\infty\right)   &  =\int_{-\infty}^{\infty}\Phi\left(  \frac
{\sqrt{\left\vert \rho\right\vert }z-\mu_{1}/\sigma_{1}}{\sqrt{1-\left\vert
\rho\right\vert }}\right)  \phi\left(  z\right)  dz.\nonumber
\end{align}
Combining these with the above expression for $\mathbb{P}\left(  Q_{1}\right)
$ we have the calculation in closed form.

Note that one can also modify the expression (2.2.3) in \cite{TO} to calculate
this in another way by deriving the analogous relation for $\tilde{F}\left(
x_{1},x_{2}\right)  :=\int_{x_{1}}^{\infty}ds_{1}\int_{x_{2}}^{\infty}%
ds_{2}f\left(  s_{1},s_{2}\right)  $ so $\mathbb{P}\left\{  X_{1}\geq
0,\ X_{2}\geq0\right\}  =\tilde{F}\left(  0,0\right)  $ .

The calculations for $\mathbb{P}\left(  Q_{2}\right)  ,$ $\mathbb{P}\left(
Q_{3}\right)  ,$ $\mathbb{P}\left(  Q_{4}\right)  $ are similar.

Defining $H_{T}:=\left\{  X_{2}>0\right\}  $ and $H_{B}:=\left\{
X_{2}<0\right\}  $ we note%

\begin{equation}
\mathbb{P}\left(  H_{B}\right)  =F\left(  \infty,0\right)  ,\ \mathbb{P}%
\left(  H_{T}\right)  =1-F\left(  \infty,0\right)  .
\end{equation}

\bigskip

\section{Appendix B:\ Other Quadrants}

We have been considering $X_{1}/X_{2}$ when both $X_{1}$ and $X_{2}$ are
positive. The remaining possibilities are calculated below. We divide $\left(
X_{1},X_{2}\right)  $ space into the usual quadrants $Q_{1},Q_{2},Q_{3}$ and
$Q_{4}$ and also let $H_{T}$ and $H_{B}$ denote the half-spaces consisting of
$X_{2}>0$ and $X_{2}<0$ respectively. Note that sets such as $\left\{
X_{2}=0\right\}  $ will be of measure zero in terms of the multivariate
density, which is written in terms of the exponential $\phi$ defined earlier
as
\begin{equation}
f\left(  s_{1},s_{2}\right)  =\frac{1}{\sigma_{1}\sigma_{2}\sqrt{1-\rho^{2}}%
}\phi\left(  \frac{s_{2}-\mu_{2}}{\sigma_{2}}\right)  \phi\left(  \frac
{s_{1}-\left(  \mu_{1}+\rho\frac{\sigma_{1}}{\sigma_{2}}\left(  s_{2}-\mu
_{2}\right)  \right)  }{\sigma_{1}\sqrt{1-\rho^{2}}}\right)  .
\end{equation}

\subsection{Obtaining the density without conditioning.}

$\left(  i\right)  $ Upon using the half-spaces we can write%
\begin{align}
\mathbb{P}\left\{  X_{1}/X_{2}\leq x\ and\ X_{2}>0\right\}   &  =\int
_{0}^{\infty}ds_{2}\int_{-\infty}^{xs_{2}}ds_{1}f\left(  s_{1},s_{2}\right)
\nonumber\\
\ \mathbb{P}\left\{  X_{1}/X_{2}\leq x\ and\ X_{2}<0\right\}   &
=\int_{-\infty}^{0}ds_{2}\int_{xs_{2}}^{\infty}ds_{1}f\left(  s_{1}%
,s_{2}\right)  .
\end{align}
Using the theorem of total probability and ignoring sets of measure \ zero we
write%
\begin{align}
\mathbb{P}\left\{  X_{1}/X_{2}\leq x\ \right\}   &  =\mathbb{P}\left\{
X_{1}/X_{2}\leq x\ and\ X_{2}>0\right\} \nonumber\\
&  +\mathbb{P}\left\{  X_{1}/X_{2}\leq x\ and\ X_{2}<0\right\}  \ .
\end{align}
Differentiating this expression, we obtain the density (without conditioning)
\[
f_{X_{1}/X_{2}}\left(  x\right)  =\partial_{x}\mathbb{P}\left\{  X_{1}%
/X_{2}\leq x\right\}  =\int_{0}^{\infty}f\left(  xs_{2},s_{2}\right)
s_{2}ds_{2}-\int_{-\infty}^{0}s_{2}f\left(  xs_{2},s_{2}\right)  ds_{2}.
\]
The first integral has already been calculated. Using the symmetry of $\phi$
we can rewrite the second integral as%
\begin{gather}
-\int_{-\infty}^{0}s_{2}f\left(  xs_{2},s_{2}\right)  =\nonumber\\
\frac{1}{\sigma_{1}\sigma_{2}\sqrt{1-\rho^{2}}}\int_{0}^{\infty}z_{2}%
\phi\left(  \frac{z_{2}-\left(  -\mu_{2}\right)  }{\sigma_{2}}\right)
\phi\left(  \frac{xz_{2}-\left(  -\mu_{1}+\rho\frac{\sigma_{1}}{\sigma_{2}%
}\left[  z_{2}-\left(  -\mu_{2}\right)  \right]  \right)  }{\sigma_{1}%
\sqrt{1-\rho^{2}}}\right)  .
\end{gather}
We see that this is the same integral as the first with $\left(  \mu_{1}%
,\mu_{2}\right)  $ replaced by $\left(  -\mu_{1},-\mu_{2}\right)  ,$ and thus
has similar properties.

Thus we can express the (non-conditioned) density as%
\begin{gather}
f_{X_{1}/X_{2}}\left(  x\right)  =\partial_{x}\mathbb{P}\left\{  X_{1}%
/X_{2}\leq x\right\}  =\nonumber\\
\int_{0}^{\infty}s_{2}ds_{2}\frac{1}{\sigma_{1}\sigma_{2}\sqrt{1-\rho^{2}}%
}\phi\left(  \frac{s_{2}-\mu_{2}}{\sigma_{2}}\right)  \phi\left(  \frac
{xs_{2}-\left(  \mu_{1}+\frac{\rho\sigma_{1}}{\sigma_{2}}\left(  s_{2}-\mu
_{2}\right)  \right)  }{\sigma_{1}\sqrt{1-\rho^{2}}}\right) \nonumber\\
+\frac{1}{\sigma_{1}\sigma_{2}\sqrt{1-\rho^{2}}}\int_{0}^{\infty}z_{2}%
\phi\left(  \frac{z_{2}-\left(  -\mu_{2}\right)  }{\sigma_{2}}\right)
\phi\left(  \frac{xz_{2}-\left(  -\mu_{1}+\rho\frac{\sigma_{1}}{\sigma_{2}%
}\left[  z_{2}-\left(  -\mu_{2}\right)  \right]  \right)  }{\sigma_{1}%
\sqrt{1-\rho^{2}}}\right)  \ .
\end{gather}

Hence, this can be analyzed asymptotically in the same manner as the
conditional probability with similar results.

$\left(  ii\right)  $ We can compute the other conditional probabilities in
addition to $f_{X_{1}/X_{2}}\left(  x\ |\ Q_{1}\right)  .$ For $X_{1}<0,$
$X_{2}<0$ \ we have%
\begin{align}
&  \mathbb{P}\left\{  X_{1}/X_{2}\leq x\ and\ X_{1}<0,\ X_{2}<0\right\}
\nonumber\\
&  =\mathbb{P}\left\{  X_{1}\geq xX_{2}\ and\ X_{1}<0,\ X_{2}<0\right\}
\nonumber\\
&  =\left\{
\begin{array}
[c]{c}%
\int_{-\infty}^{0}ds_{2}\int_{xs_{2}}^{0}ds_{1}f\left(  s_{1},s_{2}\right) \\
0
\end{array}
\right.
\begin{array}
[c]{c}%
if\ \ x>0\\
if\text{ }x\leq0
\end{array}
\ \ ,
\end{align}
Hence, we can differentiate with respect to $x$ and use the definition of
conditional density to obtain, for $x\geq0,$
\begin{align}
f_{X_{1}/X_{2}}\left(  x\ |\ Q_{3}\right)   &  =\partial_{x}\mathbb{P}\left(
X_{1}/X_{2}\leq x\ |\ Q_{3}\right) \nonumber\\
&  =\left[  \mathbb{P}\left(  Q_{3}\right)  \right]  ^{-1}\partial_{x}%
\int_{-\infty}^{0}ds_{2}\int_{xs_{2}}^{0}ds_{1}f\left(  s_{1},s_{2}\right)
\nonumber\\
&  =-\left[  \mathbb{P}\left(  Q_{3}\right)  \right]  ^{-1}\int_{-\infty}%
^{0}s_{2}f\left(  xs_{2},s_{2}\right)  ds_{2},
\end{align}
and \ $f_{X_{1}/X_{2}}\left(  x\ |\ Q_{3}\right)  =0$ if $x<0.$

Similarly, we write%
\begin{equation}
f_{X_{1}/X_{2}}\left(  x\ |\ Q_{2}\right)  =\left\{
\begin{array}
[c]{ccc}%
\left[  \mathbb{P}\left(  Q_{2}\right)  \right]  ^{-1}\int_{0}^{\infty
}f\left(  xs_{2},s_{2}\right)  ds_{2} & if & x<0\\
0 & if & x\geq0
\end{array}
\right.
\end{equation}
and
\begin{equation}
f_{X_{1}/X_{2}}\left(  x\ |\ Q_{4}\right)  =\left\{
\begin{array}
[c]{ccc}%
-\left[  \mathbb{P}\left(  Q_{4}\right)  \right]  ^{-1}\int_{-\infty}%
^{0}f\left(  x_{s_{2}},s_{2}\right)  ds_{2} & if & x<0\\
0 & if & x\geq0
\end{array}
\right.
\end{equation}

\bigskip

\section{Appendix C: Relations between conditional densities.}

For any $x\in\mathbb{R}$ we have%
\begin{align}
\mathbb{P}\left\{  X_{1}/X_{2}\leq x,\ X_{2}>0\right\}   &  =\mathbb{P}%
\left\{  X_{1}/X_{2}\leq x,\ X_{2}>0,X_{1}>0\right\} \nonumber\\
&  +\mathbb{P}\left\{  X_{1}/X_{2}\leq x,\ X_{2}>0,X_{1}<0\right\}  \ .
\end{align}
Let $H_{R}:=\left\{  X_{1}>0,X_{2}\in\mathbb{R}\right\}  $ and $H_{L}%
:=\left\{  X_{1}<0,X_{2}\in\mathbb{R}\right\}  $.

For $x>0$ the probability that $X_{1}/X_{2}\leq x$ is $1$ if $X_{2}>0$ and
$X_{1}<0,$ so differentiating the term above yields%
\begin{equation}
\partial_{x}\mathbb{P}\left\{  X_{1}/X_{2}\leq x,\ X_{2}>0\right\}
=\partial_{x}\mathbb{P}\left\{  X_{1}/X_{2}\leq x,\ X_{2}>0,\ X_{1}>0\right\}
\ .
\end{equation}
Rewriting each side using conditional probability, we have,%
\begin{align}
&  \partial_{x}\mathbb{P}\left\{  X_{1}/X_{2}\leq x\ |\ X_{2}>0\right\}
\mathbb{P}\left\{  X_{2}>0\right\} \nonumber\\
&  =\partial_{x}\mathbb{P}\left\{  X_{1}/X_{2}\leq x\ |\ X_{2}>0,\ X_{1}%
>0\right\}  \mathbb{P}\left(  Q_{1}\right)
\end{align}
sand we have, in terms of conditional probability,%
\begin{align}
&  \partial_{x}\mathbb{P}\left\{  X_{1}/X_{2}\leq x\ |\ X_{2}>0\right\}
\mathbb{P}\left(  H_{T}\right) \nonumber\\
&  =\partial_{x}\mathbb{P}\left\{  X_{1}/X_{2}\leq x\ |\ X_{2}>0,\ X_{1}%
>0\right\}  \mathbb{P}\left(  Q_{1}\right)  .
\end{align}
The differentiated terms are just conditional densities, and we write
\begin{equation}
f_{X_{1}/X_{2}}\left(  x\ |\ H_{T}\right)  \mathbb{P}\left(  H_{T}\right)
=f_{X_{1}/X_{2}}\left(  x\ |Q_{1}\right)  \mathbb{P}\left(  Q_{1}\right)  ~.
\end{equation}

For $x<0$ the probability that $X_{1}/X_{2}\leq x$ is $0$ if $X_{1}>0$ and
$X_{2}>0.$ Thus we can write, from the first expression,%
\begin{equation}
\partial_{x}\mathbb{P}\left\{  X_{1}/X_{2}\leq x,\ X_{2}>0\right\}
=\partial_{x}\mathbb{P}\left\{  X_{1}/X_{2}\leq x,\ X_{2}>0,X_{1}<0\right\}
\ .
\end{equation}
Using the same procedure as above, we write%
\begin{align}
&  \partial_{x}\mathbb{P}\left\{  X_{1}/X_{2}\leq x\ |X_{2}>0\right\}
\mathbb{P}\left\{  X_{2}>0\right\} \nonumber\\
&  =\partial_{x}\mathbb{P}\left\{  X_{1}/X_{2}\leq x\ |\ X_{2}>0,X_{1}%
<0\right\}  \mathbb{P}\left(  Q_{2}\right)  \ .
\end{align}
We can then write the expression in terms of conditional density as
\begin{equation}
f_{X_{1}/X_{2}}\left(  x\ |\ H_{R}\right)  \mathbb{P}\left(  H_{R}\right)
=f_{X_{1}/X_{2}}\left(  x\ |\ Q_{2}\right)  \mathbb{P}\left(  Q_{2}\right)  .
\end{equation}

\end{document}